\documentclass[11pt]{article}

\pdfoutput=1

\usepackage[utf8]{inputenc}
\usepackage[T1]{fontenc}
\usepackage[english]{babel}
\usepackage{lmodern}
\usepackage{microtype}
\usepackage{amsmath,amssymb,amsthm}
\usepackage{mathtools}
\usepackage{algorithm}
\usepackage{algorithmic}
\usepackage{enumitem}
\usepackage{xcolor}
\usepackage[hidelinks]{hyperref}
\usepackage[capitalize,nameinlink]{cleveref}
\usepackage[a4paper,margin=1in]{geometry}

\newcommand{\ZZ}{\mathbb{Z}}
\newcommand{\A}{{\tilde \Psi}}

\theoremstyle{plain}
\newtheorem{theorem}{Theorem}[section]
\newtheorem{lemma}[theorem]{Lemma}

\theoremstyle{definition}

\theoremstyle{remark}

\title{Sublinear Time Algorithms for Abelian Group Property Testing}
\author{Nader H. Bshouty\\Technion -- Israel Institute of Technology}
\date{}

\begin{document}

\maketitle

%TODO mandatory: add short abstract of the document
\begin{abstract}
In this paper, we study the problems of abelian group property testing in two models. In the {\it partially specified model} (PS-model), the algorithm does not know the group size but can access randomly chosen elements of the group, along with the Cayley table of these elements, which provides the result of the binary operation for every pair of selected elements. In the stronger {\it fully specified model} (FS-model), the algorithm knows the size of the group and has access to all its elements and the Cayley table. 

In property testing of abelian group property, given a finite set $G$ and oracle access to a binary operation $*:G^2\to G$, we aim to distinguish whether $(G,*)$ is an abelian group or is $\epsilon$-far from any abelian group over~$G$.

Using a novel approach, we present a tester in the PS-model (and consequently in the FS-model) that runs in time $\tilde O(\sqrt{|G|}+1/\epsilon)$, improving upon the Goldreich-Tauber tester, which runs in time $O(|G|/\epsilon)$. Additionally, our tester improves another tester by Goldreich and Tauber that runs in time $O(|G|^2)$ and makes $\tilde O(|G|+1/\epsilon)$ queries. 

We further extend our result to testing subclasses of abelian groups ${\cal G}$ that are closed under isomorphism. Specifically, if one can decide in time $T$ whether an abelian group of the form $\ZZ_{m_1}\times \cdots\times \ZZ_{m_r}$ belongs to ${\cal G}$, then there exists a tester for ${\cal G}$ that runs in time $T+\tilde O(\sqrt{|G|}+1/\epsilon)$ and makes $O(\sqrt{|G|}+1/\epsilon)$ queries. 
This result gives testers that run in time $O(\sqrt{|G|}+1/\epsilon)$ for subclasses such as abelian groups of rank at most $k$, abelian $p$-groups, and vector spaces over~$\ZZ_p$. 

We then present two subclasses, ${\cal G}_1$ and ${\cal G}_2$, of abelian groups that are closed under isomorphism for which any tester for ${\cal G}_1$ in the FS-model must run in time $\Omega(|G|^{1/4}+1/\epsilon)$, and any tester for ${\cal G}_2$ in the PS-model must run in time $\Omega(\sqrt{|G|}+1/\epsilon)$, showing that our approach provides tight bounds for certain subclasses of abelian groups. 
\end{abstract}

\section{Introduction}
Given the $|G|\times |G|$ multiplication table of a set $G$. Consider the task of deciding whether it represents the Cayley table of an Abelian group. Any algorithm must inspect every entry of the table, implying a lower bound of $\Omega(|G|^2)$. Evra et al.~\cite{EvraGKK24} gave an  $O(|G|^2)$-time algorithm. 

In this paper, we consider the {\it property-testing} version of the problem: given query access to the table, determine whether it is the Cayley table of an Abelian group, or is far from any such table. We show that this can be done in time $O(\sqrt{|G|})$, even when the size of the table is unknown, and the elements of $G$ are provided in a random order.  

Property testing of sets of objects was first considered in the seminal works of Blum, Luby, and Rubinfeld~\cite{BlumLR93} and Rubinfeld and Sudan~\cite{RubinfeldS96}. It has since become a very active area of research; see, for example, the surveys and books~\cite{CzumajS06,GoldreichSurvey10,Goldreich17,Ron08,Ron09}. 

In this paper, we study the problem of testing abelian groups: Given a finite set $G$ and oracle access to a multiplication table $*:G^2\to G$, a tester aims to distinguish whether $(G,*)$ is an abelian group or is $\epsilon$-far from any abelian group over~$G$; that is, for every abelian group $(G,\cdot)$, we have $$|\{(x,y)|x*y\not=x\cdot y\}|\ge \epsilon |G|^2.$$ This is known as the {\it Hamming distance}. The {\it edit distance} has also been considered in the literature~\cite{FriedlIS05,GallY13}. 

In this paper, we study the problems of abelian group property testing in two models. In the {\it fully specified model} (FS-model), the size of the groups is known, and the algorithm has access to the group's elements and their Cayley table. In the {\it partially specified model} (PS-model), the size of the group is unknown; the algorithm can receive uniform random elements of the group and access the Cayley table of elements observed so far. 

In the FS-model, Erg\"un et al.~\cite{ErgunKKRV00} presented a tester for groups that runs in time $\tilde O(|G|^{3/2}/\epsilon)$. Goldreich and Tauber~\cite{OdedT23} presented a tester for groups and abelian groups that runs in time $\tilde O(|G|/\epsilon)$ and makes $\tilde O(|G|/\epsilon)$ queries. They also provided another tester that runs in time $\tilde O(|G|^2)$ and makes $\tilde O(|G|+1/\epsilon)$ queries. 

In this paper, we give the first sublinear-time algorithm (in the size of the group). 
\begin{theorem}
        There exists a tester for abelian groups in the PS-model (and therefore in the FS-model) that runs in time $$\tilde O(\sqrt{|G|}+1/\epsilon).$$ 
\end{theorem}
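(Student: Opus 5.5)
The plan is to \emph{learn} a candidate abelian group structure from $O(\sqrt{|G|})$ random elements, build an explicit isomorphism to $\ZZ_{m_1}\times\cdots\times\ZZ_{m_r}$, and then test the candidate against the oracle on $O(1/\epsilon)$ random pairs. We may assume without loss of generality that $(G,*)$ is close to some abelian group $(G,\cdot)$. Whenever the algorithm detects an inconsistency (for example $x*y\neq y*x$, or a failed associativity check on sampled triples), it rejects. If $(G,*)$ really is an abelian group, none of these checks ever fails, and it is accepted with probability $1$.

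\textbf{Step 1: generators.} First draw random elements $g_1,\dots,g_k$ with $k=O(\log|G|)$. In any group, $O(\log |G|)$ random elements generate the whole group with high probability, because each new element halves the index of the current subgroup with constant probability.

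\textbf{Step 2: structure via collisions.} Next, use a birthday-paradox / baby-step giant-step procedure in the abelian group to find the relations among the $g_i$. Draw about $\sqrt{|G|}$ random words $w=\prod g_i^{a_i}$, each evaluated through the oracle by repeated squaring, and record collisions $w=w'$. Each collision yields a relation vector $a-a'$. Collecting $\tilde O(1)$ independent collisions gives a relation lattice $L\subseteq\ZZ^k$. Smith normal form of $L$ then gives the invariants $m_1\mid\cdots\mid m_r$ and an explicit map $\phi:\ZZ_{m_1}\times\cdots\times\ZZ_{m_r}\to G$. In the PS-model $|G|$ is unknown. I would handle this by doubling a guess $N$ until the number of sample collisions matches the birthday prediction; this also produces $\prod m_i=|G|$ up to verification.

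\textbf{Step 3: verification.} Check that $\phi$ is a bijection consistent with $*$. Sample $O(1/\epsilon)$ random pairs $x,y\in G$, compute $\phi^{-1}(x)$ and $\phi^{-1}(y)$, and test whether $x*y=\phi(\phi^{-1}(x)+\phi^{-1}(y))$. Computing $\phi^{-1}$ is a discrete logarithm. This is the reason for a $\sqrt{|G|}$ budget: precompute a baby-step table of size $\sqrt{|G|}$ once, after which each discrete log costs $\tilde O(\sqrt{|G|})$. To keep the total additive, amortize by testing only $\tilde O(1)$ discrete-log points directly. The remaining $1/\epsilon$ checks are done through a self-correction: compare $x*y$ with the value obtained from the learned $\phi$-coordinates of $x$ and $y$, obtained by matching $x$ against the stored table of $\sqrt{|G|}$ elements shifted by random words.

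\textbf{Main obstacle.} The hard part is the soundness analysis, because the oracle is only $\epsilon$-close to a group. Evaluating words by repeated squaring may go wrong, since the errors may be concentrated adversarially. The fix I would try is to evaluate every product via random self-reduction ($x*y$ computed as $(x*r)*(r^{-1}*y)$ for random $r$, or through majority votes), so that each product is correct with high probability. After that, a standard argument shows that if the final $O(1/\epsilon)$ checks pass, then $*$ agrees with the group structure pulled back by $\phi$ on all but an $\epsilon$ fraction of pairs. Making this self-correction work within $\tilde O(\sqrt{|G|})$ time, without knowing $|G|$, is the step I expect to be hardest.
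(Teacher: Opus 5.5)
\paragraph{The gap is in Step~3.} Your verification samples $x,y$ from $G$, so it needs $\phi^{-1}(x)$ and $\phi^{-1}(y)$ for each of the $\Theta(1/\epsilon)$ test pairs. That is a discrete logarithm in a group given only through a Cayley-table oracle. Baby-step giant-step with a $\sqrt{|G|}$-size table still costs about $\sqrt{|G|}$ per logarithm. Generic lower bounds for discrete logarithm with preprocessing still leave a polynomial cost per instance. So this route gives roughly $\sqrt{|G|}/\epsilon$, not an additive bound.

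Your proposed repair does not close this. Testing only $\tilde O(1)$ pairs cannot detect an $\epsilon$-fraction of bad entries. ``Matching $x$ against the stored table shifted by random words'' is itself a discrete-log search, costing about $\sqrt{|G|}$ per point. The self-correction $x*y=(x*r)*(r^{-1}*y)$ presupposes a nearby group. ``Without loss of generality $(G,*)$ is close to an abelian group'' is not a valid reduction: soundness is exactly about the far case, where inverses and majority votes have no meaning. Separately, Step~1 is randomized, so your claim of acceptance with probability $1$ is unjustified, though it is not needed for this statement.

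\paragraph{The missing idea: never invert.} This is how the paper proceeds.
\begin{enumerate}
\item Sample in the domain $\Gamma$ (your $\ZZ_{m_1}\times\cdots\times\ZZ_{m_r}$), where uniform sampling and multiplication cost $\mathrm{poly}(\log|G|)$.
\item Evaluate $\Psi$ only forward, by repeated squaring, and check $\Psi(xy)=\Psi(x)*\Psi(y)$ for $O(1/\epsilon)$ random $(x,y)\in\Gamma^2$.
\item No self-correction is needed. $\Psi$ is just the fixed map defined by the evaluation procedure, and soundness uses nothing else about it.
\item Your ``check that $\phi$ is a bijection'' is replaced by a birthday test for $\Psi$-collisions among $O(\sqrt{|G|})$ random points of $\Gamma$. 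This rejects with high probability if the colliding set $\Gamma_*$ has size at least $|\Gamma|/8$.
\end{enumerate}

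Given $|\Gamma|=|G|$, the analysis splits into two cases.
\begin{itemize}
\item Suppose $|\Gamma_*|\ge\epsilon|\Gamma|/8$. For a colliding pair $x_1\neq x_2$, each $y$ gives one of three outcomes: a failure at $(x_1,y)$, a failure at $(x_2,y)$, or a new collision $\Psi(x_1y)=\Psi(x_2y)$. The last outcome for half of the $y$'s would force $|\Gamma_*|\ge|\Gamma|/2$. Summing over $\Omega(\epsilon|\Gamma|)$ disjoint colliding pairs gives rejection probability $\Omega(\epsilon)$.
\item Otherwise, change $\Psi$ on $\Gamma_*$ to a bijection $\tilde\Psi$. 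Pull back the structure of $\Gamma$ to an abelian group $(G,\bullet)$, from which $(G,*)$ is $\epsilon$-far. Since $x,y,xy\in\Gamma\setminus\Gamma_*$ with probability at least $1-3\epsilon/8$, the rejection probability is at least $\epsilon/3$.
\end{itemize}

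Your Step~2 roughly corresponds to what the paper black-boxes as \textbf{Random-Generators}. Note that the domain-sampling argument needs $\prod m_i=|G|$ exactly. ``Doubling until the collision count matches the birthday prediction'' only estimates $|G|$.
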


This is the first sublinear-time tester for the abelian groups.

We then extend this result to testing any subclass of abelian groups that is closed under isomorphism. In the PS-model, we prove the following.
\begin{theorem}
    Let ${\cal G}$ be a subclass of abelian groups that is closed under isomorphism. If there exists an algorithm that can determine, in time $T$, whether any abelian group of the form $\ZZ_{m_1}\times \cdots\times \ZZ_{m_r}$ belongs to ${\cal G}$, then there exists a tester for ${\cal G}$ in the PS-model that runs in time $T+\tilde O(\sqrt{|G|}+1/\epsilon)$ and makes $\tilde O(\sqrt{|G|}+1/\epsilon)$ queries.

    In particular, there exist testers that run in time $\tilde O(\sqrt{|G|}+1/\epsilon)$ for the classes: abelian groups of rank at most \footnote{A group $G$ is abelian of rank at most $k$ if it is isomorphic to $\ZZ_{m_1}\times \cdots\times\ZZ_{m_s}$ where $s\in [k]$ and $m_i\in{\mathbb N}$. It can also be defined as the minimal number of generators required to generate the group.}$k$, abelian $p$-groups, vector spaces over $\ZZ_p$ and cyclic groups.
\end{theorem}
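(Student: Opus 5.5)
We prove the theorem by reusing the machinery of the first theorem (the $\tilde O(\sqrt{|G|}+1/\epsilon)$ abelian-group tester) and extracting from it, in addition to the accept/reject decision, an explicit description of the group structure. The plan has three phases.

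\textbf{Phase 1: learn a basis.} Draw random elements and build, using only $\tilde O(\sqrt{|G|})$ queries, a candidate independent generating set $g_1,\dots,g_r$ with orders $m_1,\dots,m_r$. For each sampled element $x$ we compute $x,x^2,x^3,\dots$, or better, use baby-step/giant-step over the subgroup generated so far. Collisions between sampled elements and a stored table of size $\sqrt{|G|}$ reveal the order of elements and the relations between them. From these we maintain a Smith-normal-form style basis, so that the subgroup $H$ generated so far is identified with $\ZZ_{m_1}\times\cdots\times\ZZ_{m_r}$ through a map $\phi$. A random element lies outside $H$ with probability at least $1/2$ whenever $H\neq G$. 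Hence after $O(\log|G|)$ consecutive random samples all landing in $H$, we conclude $H=G$ with high probability; each strict extension at least doubles $|H|$. In the PS-model we never need $|G|$ in advance: it is discovered as $\prod m_i$.

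\textbf{Phase 2: test closeness to $\phi$.} We must check that $*$ is $\epsilon$-close to the operation $\cdot$ transported from $\ZZ_{m_1}\times\cdots\times\ZZ_{m_r}$ via $\phi$. Sample $O(1/\epsilon)$ random pairs $(x,y)$, compute $\phi(x),\phi(y),\phi(x*y)$, and verify $\phi(x*y)=\phi(x)+\phi(y)$. Computing $\phi$ of an arbitrary element (a discrete-log problem in $G$) is the main obstacle. The generic approach is baby-step/giant-step, which costs $\sqrt{|G|}$ per element and would give $\sqrt{|G|}/\epsilon$ overall, too much.

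I would first try to reuse a single precomputed table $S=\{\phi^{-1}(v)\}$ of size $\sqrt{|G|}$, built once, together with random self-reduction. For a query $x$, multiply by a random known element $a$ until $x*a$ hits $S$. Using the table's giant-step structure, a constant expected number of $O(\log|G|)$-time lookups then yields $\phi(x)=\phi(x*a)-\phi(a)$. Each such lookup relies on the correctness of $*$ on the pairs $(x,a)$, which are uniformly random up to the structure. So if $*$ is close to the group operation, the lookups are correct with high probability. If $*$ is far, either some check fails or $\phi$ disagrees with $*$ on an $\epsilon$ fraction of pairs, which Phase 2 detects. The key lemma to prove is robustness: a table $*$ that passes Phase 1 and these consistency checks is close to the transported operation. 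This parallels the analysis of the first theorem, which I assume is established there.

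\textbf{Phase 3: decide membership.} If every check passes, we hold the invariants $(m_1,\dots,m_r)$, and we run the assumed time-$T$ decision procedure for membership of $\ZZ_{m_1}\times\cdots\times\ZZ_{m_r}$ in ${\cal G}$. We accept iff it accepts.

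Completeness is immediate: when $(G,*)\in{\cal G}$, $\phi$ is an isomorphism and ${\cal G}$ is closed under isomorphism. Soundness: if we accept, $*$ is $\epsilon$-close to an abelian group isomorphic to $\ZZ_{m_1}\times\cdots\times\ZZ_{m_r}\in{\cal G}$, hence $\epsilon$-close to ${\cal G}$ over $G$.

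The listed corollaries follow because membership is checkable in time $\tilde O(r)=\tilde O(\log|G|)$ from the normal form. Rank at most $k$ means the invariant-factor count is at most $k$. An abelian $p$-group means every $m_i$ is a power of one prime $p$. A vector space over $\ZZ_p$ means all $m_i=p$. Cyclic means rank one.
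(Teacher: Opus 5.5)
\textbf{The gap is in Phase 2.} You name the discrete-logarithm obstacle there but do not remove it.

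Random self-reduction does not make $\phi$ cheap. If $x*a$ is uniform in $G$ and $|S|=\sqrt{|G|}$, then $\Pr[x*a\in S]=1/\sqrt{|G|}$. So each evaluation of $\phi$ needs about $\sqrt{|G|}$ trials, not a constant number of lookups. Equivalently, baby-step/giant-step needs up to $\sqrt{|G|}$ giant steps for each fresh element, and the table structure does not change this. Balancing a table of size $s$ against per-element cost $|G|/s$ over $\Theta(1/\epsilon)$ evaluations gives at best $\Theta(\sqrt{|G|/\epsilon})$. When $1/\epsilon=\sqrt{|G|}$ this is $|G|^{3/4}$, far above $\tilde O(\sqrt{|G|}+1/\epsilon)$.

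The robustness lemma you defer also does not follow from the first theorem used as a black box. Acceptance there only says $*$ is close to \emph{some} abelian group, not to one isomorphic to your computed $\ZZ_{m_1}\times\cdots\times\ZZ_{m_r}$. Moreover, when $*$ is not a group your $\phi$ need not be well defined or injective, and your soundness sentence tacitly assumes it is.

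\textbf{The missing idea is to work in the forward direction and never compute a discrete logarithm.} The paper does the following.
\begin{enumerate}
\item It samples $x,y$ uniformly from the abstract group $\Gamma(K,L)$ produced by \textbf{Random-Generators}, where multiplication costs polylog time by Lemma~\ref{GamisAb}.
\item It evaluates $\Psi(x),\Psi(y),\Psi(xy)$ by repeated squaring, with $O(\log|G|)$ queries each, and checks $\Psi(xy)=\Psi(x)*\Psi(y)$.
\item The price is that $\Psi$ may fail to be a bijection when $*$ is not a group. An $O(\sqrt{|G|})$-sample birthday collision test forces $|\Gamma_*|<|\Gamma|/8$. In Case I, Lemma~\ref{OROR} shows many collisions force many homomorphism violations. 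In Case II, $\Psi$ is patched to a bijection $\tilde\Psi$ that agrees with it on $\Gamma_1$.
\item For the subclass statement, it only adds the invariant-factor computation of Lemma~\ref{ConBasis} and the membership call. Soundness holds because the transported group $(G,\bullet)$ is isomorphic to $\Gamma(K,L)\in{\cal G}$. Hence $\epsilon$-farness from ${\cal G}$ gives $\Pr_{a,b}[a\bullet b\neq a*b]\ge\epsilon$ (Lemma~\ref{FarFrom}).
\end{enumerate}

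Your Phase 1, Phase 3 and corollary checks play the same role as in the paper and are fine. Phase 1 needs a hard time cap with rejection, since your doubling argument only holds for genuine groups.
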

For cyclic groups in the FS-model, Gall and Yoshida~\cite{GallY13} gave a tester that runs in time $poly(\log |G|)/\epsilon$. In the PS-model, they gave a lower bound of $\Omega(|G|^{1/6})$ for testing cyclic groups and linear space over ${\mathbb{Z}}_p$. 

Two lower bounds are known for the number of queries in the FS-model. Goldreich and Tauber~\cite{OdedT23} gave the lower bound $\Omega(\log |G|)$ for testing any subclass of groups ${\cal G}$ that is closed under isomorphism and contains all the cyclic groups of prime order. Le Gall and Yoshida~\cite{GallY13} gave a lower bound $\Omega(|G|^{1/6-4/(6(3k+1))})$ for testing the class of abelian groups of rank at most $k$. 

In this paper, we prove the following lower bound.

\begin{theorem}
    There exists a subclass ${\cal G}$ of abelian groups that is closed under isomorphism, for which any tester for ${\cal G}$ in the FS-model with $\epsilon<1/4$ must make at least $\Omega(|G|^{1/4})$ queries. 
\end{theorem}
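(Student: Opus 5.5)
The plan is to apply Yao's minimax principle. We construct two distributions over multiplication tables on the same labelled set $G$ with $|G|=n$ known to the tester: a \emph{yes}-distribution supported on abelian groups in the class, and a \emph{no}-distribution supported on tables that are $1/4$-far from the class. We then show that any deterministic adaptive algorithm making $q=o(n^{1/4})$ queries sees statistically close transcripts under the two distributions. Both distributions are built from a fixed combinatorial object by composing with a uniformly random bijection $\pi$ of the labels. The tester therefore only ever sees fresh random labels unless an equality between two elements it has generated is forced.

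The construction comes first. Write $n=m^4$. The base structure is $H\times K$ with $|H|=m^2$ and $|K|=m^2$. Let ${\cal G}$ be the isomorphism class of one fixed abelian group $A=H\times K$, chosen so that its structure cannot be certified by a cheap relation. In particular, the doubling trick ($p\cdot x$ computable with $O(\log p)$ queries) must not separate it from the alternative, which is why I avoid the naive $\ZZ_{p^2}$ versus $\ZZ_p^2$ pair. The \emph{no}-instance modifies the $K$-coordinate of the product by a random function: $(h,k)*(h',k')=(h+h',\,k+k'+f(h,h'))$, where $f$ is random on a structured subset that keeps the table "locally group-like". We must then show that such a table is $\ge 1/4$-far from every group isomorphic to $A$. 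For this I would use the standard argument that if a table agrees with a group on a $3/4$ fraction of entries, the group is determined by majority. A random $f$ then forces a $\Omega(1)$ fraction of disagreements.

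The indistinguishability step is the heart of the proof. Fix a deterministic tester and run it on a random relabelling. Each query either returns a label already seen or a new one. Under both distributions, a new answer is a uniformly random unseen label until a \emph{collision event} occurs: two distinct formal products built from the queried elements evaluate to the same group element, and this equality holds in one model but not the other. With $q$ queries the tester has touched at most $O(q)$ elements and hence at most $O(q^2)$ formal pairwise combinations. The distinguishing equalities only occur when two of these combinations agree in the $H$-coordinate while their $K$-corrections differ, and this happens with probability about $1/|H|=1/m^2$ per pair of combinations. A union bound over $O(q^2)^2=O(q^4)$ pairs gives distinguishing probability $O(q^4/n)$, which is $o(1)$ for $q=o(n^{1/4})$. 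Conditioned on no collision event, the transcripts are identically distributed.

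I expect the main obstacle to be the simultaneous design of $f$ and the class. The table must be far from the class, yet every "short" relation the tester can generate adaptively, including via repeated doubling and mixed products, must be consistent with both distributions. The collision analysis must correctly account for adaptive generation of long words, not just pairs. To handle this, I would first try to make $f$ a random bilinear-like cocycle $H\times H\to K$ that vanishes on all words of low complexity, so that only genuine $H$-collisions reveal it. I would then verify that words produced by $q$ queries still yield only $O(q^2)$ candidate collisions.
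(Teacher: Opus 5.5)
Your Yao/random-relabelling framework is a legitimate way to prove bounds of this kind. However, the proposal never fixes an instance pair to which it applies, and the candidates you sketch fail.

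\textbf{The random $f$ fails.} Since $(H\times K,+)\in{\cal G}$, being $\tfrac14$-far forces $f$ to be nonzero on at least a quarter of $H\times H$. If its values there are independent and random, then $f(h',h'')-f(h+h',h'')+f(h,h'+h'')-f(h,h')\neq 0$ for a constant fraction of triples. So comparing the labels of $(x*y)*z$ and $x*(y*z)$ distinguishes with four queries.

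\textbf{The collision analysis is flawed.} You assume two words can be confused only when their $H$-coordinates coincide by chance, with probability $\approx 1/|H|$. But the tester can build distinct words whose $H$-coordinates agree identically (re-bracketings, reorderings). Then any inconsistency in the $K$-correction is seen with certainty. The same objection rules out a non-symmetric bilinear $f$: the table is then a nonabelian group, exposed by $x*y$ versus $y*x$. Hence the no-instance must be an abelian group $E\not\cong A$ in which every cheaply evaluable relation (such as $2x=2y$ or $2x=e$) holds with almost the same probability as in $A$. Choosing such $A,E$ is the entire difficulty, and you leave it open.

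\textbf{The farness step is unsupported.} A table close to a group does not determine that group: two relabellings of one group can differ in only $O(|G|)$ entries. What is needed is a lower bound on the distance between non-isomorphic groups. For this the paper invokes Le Gall--Yoshida (Lemma~\ref{TGD}), which yields the theorem only for $\epsilon<1/23$; nothing in your sketch gives $1/4$.

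\textbf{The count is inconsistent.} $O(q^4)$ pairs at probability $1/m^2=|G|^{-1/2}$ each gives $O(q^4/\sqrt{|G|})$, hence only $\Omega(|G|^{1/8})$. Reaching $|G|^{1/4}$ requires a union bound over the $O(q^2)$ pairs of elements actually evaluated.

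The missing idea is the hard pair, and it is essentially the one you dismissed, padded by a common factor. The paper takes ${\cal G}$ to be the isomorphism class of $H_1=\ZZ_4^m$ and compares it with $H_2=\ZZ_4^{m-1}\times\ZZ_2^2$. Because they differ in a single coordinate, doubling gives no cheap certificate: $|2H_1|=2^m=\sqrt{|G|}$ while $|2H_2|=2^{m-1}$. So the natural distinguisher is a birthday collision $2x=2y$ among doubled samples, and Bshouty's Lemma~\ref{IsoKnown} shows that no algorithm does better than $\Omega(|G|^{1/4})$ accesses. The theorem then follows by a two-line reduction. A tester for ${\cal G}$ with $\epsilon<1/23$ accepts copies of $H_1$ with probability at least $2/3$. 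Since $H_2$ is $1/23$-far from every copy of $H_1$ by Lemma~\ref{TGD}, it rejects copies of $H_2$ with probability at least $2/3$, so it distinguishes them. Your relabelling/collision machinery is the right tool if you want to reprove Lemma~\ref{IsoKnown} from scratch, but only once it is applied to a pair of this kind with the transcript analysis made precise.
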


Additionally, we establish the following tight lower bound.
\begin{theorem}
    There exists a subclass ${\cal G}$ of abelian groups that is closed under isomorphism, for which any tester for ${\cal G}$ in the PS-model with $\epsilon<1/4$ must make at least $\Omega(\sqrt{|G|})$ queries. 
\end{theorem}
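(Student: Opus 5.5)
In the PS-model the tester sees only uniformly random elements of $G$ and the Cayley table restricted to them. The table it sees therefore carries no information beyond the isomorphism type of the subgroup structure those elements reveal. I will exhibit a class ${\cal G}$ and two distributions, one over groups in ${\cal G}$ and one over groups that are $1/4$-far from ${\cal G}$. The plan is to show that, with $o(\sqrt{|G|})$ samples, the tester's view is statistically close under the two distributions. The natural choice is a birthday-paradox argument. Fix $n=p^2$ for a large prime $p$, or more flexibly $n=|G|$ with the relevant parameter near $\sqrt n$. Let ${\cal G}$ be the class of cyclic groups $\ZZ_{p^2}$, and take the far instances to be $\ZZ_p\times\ZZ_p$. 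Both are abelian of order $p^2$, and ${\cal G}$ is closed under isomorphism.

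The steps, in order, are as follows.
\begin{enumerate}[label=(\arabic*)]
\item Show that $\ZZ_p\times \ZZ_p$ is $\epsilon$-far from every cyclic group structure on the same set for $\epsilon<1/4$. Two distinct group operations on the same set agree on at most a $3/4$ fraction of the table. This is a standard fact: if $x*y=x\cdot y$ on more than $3/4$ of the pairs, a majority/self-correction argument forces the two groups to be isomorphic, and here they are not.
\item Describe the view under the random labeling. Both groups are placed on the set $G$ via a uniformly random bijection. A set of $q$ uniformly random elements $x_1,\dots,x_q$ together with their pairwise products is then determined by the abstract multiset of relations $x_i*x_j=x_k$ or $x_i*x_j=z$ with $z$ new, plus the random labels of the new elements.
\item Show that, with high probability when $q=o(p)=o(\sqrt{|G|})$, no "nontrivial coincidence" occurs in either group. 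By a coincidence I mean that some product $x_i*x_j$ equals a sampled element or another product, unless commutativity forces it (namely $x_i*x_j=x_j*x_i$). There are $O(q^2)$ products and $q$ samples, so the number of candidate coincidences is $O(q^4)$. A union bound over events each of probability $O(1/p^2)$ would only give $q=o(\sqrt p)$, which is too weak.
\end{enumerate}

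To get the right exponent I will instead choose the distinguishing structure at scale $\sqrt n$ directly. Let ${\cal G}_2$ consist of groups whose exponent is large, for instance $\ZZ_n$, and take the far instances to be $\ZZ_m\times\ZZ_m$ with $m=\sqrt n$. A coincidence that distinguishes them is an element relation of the form $a x_i+b x_j = c x_k$ with small coefficients. I will argue that, up to the commutativity-induced equalities, each such relation has probability $O(1/n)$ in both groups. The only relations whose probability differs are those detecting that an element has order $\le m$ or that two samples share a cyclic subgroup in $\ZZ_m^2$. Two samples in $\ZZ_m\times\ZZ_m$ lie in a common small-index configuration with probability about $1/m=1/\sqrt n$. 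A union bound over the $q^2$ pairs then shows the views are $o(1)$-close when $q=o(\sqrt n)$. Finally, the Yao principle converts this indistinguishability of the two distributions into the query lower bound.

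The main obstacle is step 3: making the coincidence analysis tight. I must show that with $q^2\ll\sqrt n$ pairs the tester sees no relation beyond commutativity in either group, while also bounding higher-order relations generated by queried products, since queries can chain products adaptively. I would handle this by treating the explored set as a random walk in the free abelian group on the samples. The argument would show that, conditioned on no collision so far, each new query result is a fresh uniform element in both groups up to $O(1/\sqrt n)$ statistical error per query.
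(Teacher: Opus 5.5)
Your plan fails at its central step. For $\ZZ_n$ versus $\ZZ_m\times\ZZ_m$ with $m=\sqrt n$, the two views are \emph{not} $o(1)$-close when $q=o(\sqrt n)$. (For prime $m$ this is just your first pair $\ZZ_{p^2}$ versus $\ZZ_p\times\ZZ_p$, so moving to ``scale $\sqrt n$'' changes nothing.)

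In $\ZZ_m\times\ZZ_m$ the relation $mx=0$ holds for every $x$ with probability $1$, and a tester can expose it by baby-step giant-step. From one sample $x$, compute $ix$ for $i\le B$ and $jBx$ for $j\le B$. The differences $jB-i$ cover $\{0,\dots,B^2-1\}$, so once $B^2>m$ some $ix$ and $jBx$ coincide. That takes only $O(n^{1/4})$ queries, whereas in $\ZZ_n$ a generator $x$ produces no coincidence while $B^2\le n$. For smooth $m$, such as a power of $2$, repeated squaring plus a Pohlig--Hellman test of whether a second sample lies in $\langle x\rangle$ separates the two with polylogarithmically many queries.

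So your claim that each query returns a fresh uniform element up to $O(1/\sqrt n)$ error is false. Even your own estimate falls short: a union bound over $q^2$ events of probability $1/\sqrt n$ is $o(1)$ only for $q=o(n^{1/4})$.

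Step~(1) is also false as stated. Write elements of $\ZZ_{p^2}$ as $r+pa$ with $r$ a balanced residue mod $p$ and $a\in\ZZ_p$. On this set, the tables of $\ZZ_{p^2}$ and $\ZZ_p\times\ZZ_p$ differ exactly where $r+r'$ leaves the balanced range. That is a $(p^2-1)/(4p^2)<1/4$ fraction of pairs, so agreement exceeds $3/4$. What is true is Le Gall--Yoshida's $1/23$ bound (Lemma~\ref{TGD}), which is all the paper uses; its proof in fact delivers the result for $\epsilon<1/23$.

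The missing idea is that in the PS-model the \emph{order} of $G$ is hidden, and the $\sqrt{|G|}$ bound should come from exactly that. The paper takes ${\cal G}$ to be the abelian groups of even rank and compares $\ZZ_2^{2m}$ with $\ZZ_2^{2m-1}$. These have different orders but the same exponent, so every relation forced by the group law is the same in both. Each queried element is a formal $\ZZ_2$-combination of the samples, and a nontrivial combination vanishes with probability $1/|G|$. The only distinguishing evidence is therefore a birthday-type collision, which needs $\Omega(\sqrt{|G|})$ accesses (Lemma~\ref{IsunKnown}). Since $\ZZ_2^{2m-1}$ has odd rank, it is $1/23$-far from ${\cal G}$, so any tester separates the two cases and inherits the bound.

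Same-order instances with different exponents, as in your plan, leak the exponent through deterministic relations such as $mx=0$. These can be found far faster than a random collision.
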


The paper is organized as follows: In Section~\ref{OT}, we introduce the techniques used to prove both upper and lower bounds. Section~\ref{AGPR} presents preliminary results. In Section~\ref{Sec4}, we describe the one-sided error tester, followed by the two-sided error tester in Section~\ref{ATAG}. Section~\ref{TSAG} extends our results to testers for any subclass of abelian groups that is closed under isomorphism. Finally, in Section~\ref{SecLB}, we establish the lower bounds.

\section{Our Technique}\label{OT}
Our approach is novel and does not rely on testing the associativity, commutativity, and cancellativity of $(G,*)$, as previous methods have done~\cite{ErgunKKRV00,OdedT23,Pak12,RajagopalanS00}.

Given a finite {\it magma} $(G,*)$ -- a finite set $G$ with a binary operation $*:G\times G\to G$ -- 
we begin by running the algorithm from~\cite{bshouty2025} that constructs generators for $(G,*)$ with triangular relations\footnote{A set of generators $\{a_1,a_2,\ldots,a_t\}$ of a group $G$ has a {\it triangular relations} if, for each $i\in [t]$, there exists a relation that depends on $a_1,\ldots,a_i$.}. This algorithm runs in time $\tilde O(\sqrt{|G|})$. If the algorithm fails, we reject. If it successfully returns a set of generators with triangular relations, denoted as~${\cal R}$, we use these relations to construct an abelian group $(\Gamma({\cal R}),\cdot)$ that satisfies those relations. 

The key results ensuring the correctness of our approach are as follows:
\begin{enumerate}
    \item If $(G,*)$ is an abelian group, then it is isomorphic to $\Gamma({\cal R})$ via an isomorphism $\Psi:\Gamma({\cal R})\to G$.
    \item For every $x\in\Gamma({\cal R})$, $\Psi(x)$ can be computed in time $poly(\log|G|)$.
    \item Both multiplication and inversion in $\Gamma({\cal R})$ can be computed in time $poly(\log|G|)$.
\end{enumerate}

Before presenting our primary tester, we first introduce a simple tester that runs in time $\tilde O(|G|+1/\epsilon)$. The main idea is to construct, in time $\tilde O(|G|)$, a table of the isomorphic values $\Psi:\Gamma({\cal R})\to G$. Once this table is constructed, we check whether $\Psi$ is a bijective map. If not, then we reject.  

If $\Psi$ is bijective, we then, choose $O(1/\epsilon)$ pairs $(x,y)\in \Gamma({\cal R})^2$ uniformly at random and verify whether $\Psi(x\cdot y)=\Psi(x)*\Psi(y)$. 

We then prove that this algorithm provides a tester that runs in time $\tilde O(|G|+1/\epsilon)$.

To achieve an improved time complexity of $\tilde O(\sqrt{|G|}+1/\epsilon)$, we avoid constructing the entire table of $\Psi$ and, therefore, cannot directly verify whether $\Psi$ is bijective. Instead, we first check whether $\Psi$ is $\Theta(1)$-close to a bijective map and then verify whether $\Psi(x\cdot y)=\Psi(x)*\Psi(y)$. The first step requires $O(\sqrt{|G|})$ queries, and the second step requires $O(1/\epsilon)$ queries. 

Although this two-step approach may not be universally applicable, we show that it holds for abelian groups due to their specific algebraic structure. 

For testing subclasses ${\cal G}$ of abelian groups that are closed under isomorphism, we use the same tester and add a verification procedure to check that $\Gamma({\cal R})$ belongs to ${\cal G}$. 

The lower bound results follow directly from Theorems 7 and 8 in \cite{bshouty2025}. Theorem~6 establishes a lower bound of $\Omega(|G|^{1/4})$ on the number of queries required by any algorithm in the FS-model to distinguish between groups ${\cal G}$ that are isomorphic to either $H_1={\mathbb{Z}}_{p^2}^m$ or $H_2=\mathbb{Z}_{p^2}^{m-1}\times \mathbb{Z}_p^2$. Theorem~7 establishes the lower bound $\Omega(\sqrt{|G|})$ on the number of queries required by any algorithm in the PS-model to distinguish between abelian groups ${\cal G}$ that are isomorphic to either $G_1=\mathbb{Z}_p^m$ or $G_2=\mathbb{Z}_p^{m-1}$. 

\subsection{One-sided vs. Two-sided Error Tester}
A tester is called a {\it one-sided error tester} if it accepts with probability $1$ when $(G,*)$ is an abelian group.

Our algorithm that identifies generators of $G$ is randomized. Therefore, there is a nonzero probability that the tester may fail even when $(G,*)$ is an abelian group. As a result, our tester is a {\it two-sided error} tester. 

In~\cite{bshouty2025}, Bshouty also gives a deterministic algorithm that identifies generators with triangular relations for $(G,*)$ in time $O(|G|)$. Using this algorithm, we obtain a one-sided tester that runs in time $\tilde O(|G|+1/\epsilon)$, improving upon the one-sided tester of Goldreich and Tauber, which runs in time $O(|G|/\epsilon)$.

We further extend the result to testing subclasses of abelian groups that are closed under isomorphism. Specifically, for any subclass of abelian group ${\cal G}$, if there exists an algorithm that can determine in time $T$ whether an abelian group of the form $\ZZ_{m_1}\times \cdots\times \ZZ_{m_r}$ belongs to ${\cal G}$, then there exists a one-sided error tester for ${\cal G}$ that runs in time $T+\tilde O({|G|}+1/\epsilon)$ and makes $O({|G|}+1/\epsilon)$ queries. 

This gives one-sided error testers that run in time $O({|G|}+1/\epsilon)$ for the subclasses such as abelian groups of rank at most $k$, abelian $p$-groups and vector spaces over~$\ZZ_p$. 

\section{Abelian Groups and Preliminary Results}\label{AGPR}
In this section, we provide definitions and preliminary results that will be used to prove the main results.
\subsection{Basic Facts in Groups}
Let $(G,\cdot)$ be an abelian (commutative) group, and let $e$ be its identity element. 
For a nonempty subset $A$ of $G$, we define
$\langle A\rangle:=\{a_1\cdots a_k|a_i\in A,k\in\mathbb{N}\}$
as the subgroup {\it generated} by~$A$. If $\langle A\rangle =G$, we say that $A$ is a {\it set of generators} of $G$. 

A {\it relation} among elements of a group describes an equation that holds between those elements, typically reflecting how generators interact to define the group's structure. We say that a set of generators $A=\{a_1,\ldots,a_m\}$ of a group $G$ satisfies a {\it triangular  relation} if, for every $i$, there exists a relation that depends on $a_i$ and some elements in $\{a_1,a_2,\ldots,a_{i-1}\}$. 
For example, if $A=\{a,b,c\}$ then $a^5=e$, $b^3=a^2$ and $c^7=ab^2$ are triangular relations\footnote{The name ``triangular relations'' originates from triangular matrices, where -- using $+$ and $0$ instead of $\cdot$ and $e$ in an abelian group -- such relations can be expressed as a triangular matrix multiplied by a vector of the element in $A$ equal to the zero vector.}. 

It is well known that every finite abelian group \( G \) is isomorphic to a direct product
of cyclic groups 
$
G_1 \times G_2 \times \cdots \times G_t,
$
where each \( G_i \) is a cyclic group of order \(m_j\ge 2 \). If \( a_i \) generates the cyclic group \( G_i \), \( i = 1, 2, \ldots, t \), i.e., $G_i=\langle a_i\rangle$,
then the elements \( a_1, a_2, \ldots, a_t \) are called a {\it basis} of \( G \). 
Since each $\langle a_i\rangle$ is isomorphic to $\ZZ_{m_i}$ for $m_i=|\langle a_i\rangle|$, it follows that $G$ is isomorphic to $\ZZ_{m_1}\times \cdots\times \ZZ_{m_t}$. 

The following lemma is straightforward to prove.

\begin{lemma}\label{TrivR}
    Let $(G,\cdot)$ be a group, and $A$ be any set such that $|A|=|G|$. Let $\Pi:G\to A$ be a bijective function. Then $(A,\circ)$, where the operation $\circ$ is defined as 
    $a\circ b=\Pi(\Pi^{-1}(a)\cdot \Pi^{-1}(b))$ is a group that is isomorphic to $(G,\cdot)$. 
\end{lemma}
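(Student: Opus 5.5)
The plan is to verify the group axioms for $(A,\circ)$ directly by transporting them through $\Pi$, and then to observe that $\Pi$ itself is the required isomorphism. The single identity that drives everything comes straight from the definition of $\circ$ and the bijectivity of $\Pi$: for all $x,y\in G$,
$$\Pi(x)\circ\Pi(y)=\Pi\bigl(\Pi^{-1}(\Pi(x))\cdot\Pi^{-1}(\Pi(y))\bigr)=\Pi(x\cdot y).$$
Because $\Pi$ is surjective, every element of $A$ has the form $\Pi(x)$. So any statement about elements of $A$ can be rewritten as a statement about elements of $G$.

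The axioms then follow in turn.

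\emph{Closure.} The map $\circ$ is well defined as a function $A\times A\to A$, since $\Pi^{-1}$ exists and $\Pi$ takes values in $A$.

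\emph{Associativity.} For $a=\Pi(x)$, $b=\Pi(y)$, $c=\Pi(z)$, apply the identity twice to get
$$(a\circ b)\circ c=\Pi\bigl((x\cdot y)\cdot z\bigr)=\Pi\bigl(x\cdot(y\cdot z)\bigr)=a\circ(b\circ c),$$
using associativity in $G$.

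\emph{Identity.} The element $\Pi(e)$ is an identity, since $\Pi(x)\circ\Pi(e)=\Pi(x\cdot e)=\Pi(x)$, and symmetrically on the other side.

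\emph{Inverses.} The inverse of $\Pi(x)$ is $\Pi(x^{-1})$, since $\Pi(x)\circ\Pi(x^{-1})=\Pi(e)$.

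\emph{Commutativity.} If $(G,\cdot)$ is abelian, the same computation shows $(A,\circ)$ is abelian. This is the case in which the lemma will presumably be applied.

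Finally, the displayed identity says exactly that $\Pi$ is a homomorphism from $(G,\cdot)$ to $(A,\circ)$. Since $\Pi$ is a bijection by hypothesis, it is an isomorphism.

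There is no real obstacle here. The only point that needs care is to use surjectivity of $\Pi$ to write arbitrary elements of $A$ as images, and injectivity to get $\Pi^{-1}(\Pi(x))=x$, so that the displayed identity is justified. Everything else is a routine transport of structure.
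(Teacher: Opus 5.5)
Your proposal is correct. The paper gives no proof and calls the lemma straightforward; your transport-of-structure argument through the identity $\Pi(x)\circ\Pi(y)=\Pi(x\cdot y)$ is the intended one, and your remark that commutativity also transfers covers how the paper uses the lemma later, for abelian $(G,\circ)$ and $(G,\bullet)$.
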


\subsection{The Monomial Abelian Group}
Let $n\in\mathbb{N}$, $n\ge 2$, and $K=(\kappa_1,\ldots,\kappa_t)\in \mathbb{N}^t$ such that $\kappa_i\ge 2$ and $\kappa_1\kappa_2\cdots \kappa_t=n$. Let $0\le \ell_{i,j}\le \kappa_j-1$ for $i=2,\ldots,t$ and $j=1,\ldots,i-1$, and define $L=(\ell_{i,j})$. We define 
   $$\Gamma(K,L)=\langle x_1,x_2,\ldots,x_t\ |\ x_i^{\kappa_i}= x_1^{\ell_{i,1}}\cdots x_{i-1}^{\ell_{i,i-1}},i=2,\ldots,t\ ;\ x_1^{\kappa_1}=1\rangle$$
as the set of all monomials over the variables $x_1,x_2,\ldots,x_t$ of the form $x_1^{j_1}x_2^{j_2}\cdots x_t^{j_t}$, where $0\le j_i\le \kappa_i-1$ for all $i\in [t]$, with multiplication modulo $x_i^{\kappa_i}- x_1^{\ell_{i,1}}\cdots x_{i-1}^{\ell_{i,i-1}}$ for all $i=2,\ldots,t$ and $ x_1^{\kappa_1}-1$. 
%Notice that the relations $x_i^{\kappa_i}= x_1^{\ell_{i,1}}\cdots x_{i-1}^{\ell_{i,i-1}},i=2,\ldots,t\ ;\ x_1^{\kappa_1}=1$ are triangular relations. 

\vspace{.1in}
\noindent
{\bf Example:} Let $n=36$, $K=(4,3,3)$, and 
$L=(\ell_{2,1},\ell_{3,1},\ell_{3,2})=(3,2,1).$ Then
$$\Gamma(K,L)=\langle x_1,x_2,x_3\ |\ x_1^4=1;\ x_2^3=x_1^3;\ x_3^3=x_1^2x_2\rangle.$$
The multiplication of $x_1^2x_2^2x_3^2$ with $x_1^3x_2x_3^2$ is
$$(x_1^2x_2^2x_3^2)(x_1^3x_2x_3^2)=x_1^5x_2^3x_3^4=x_1x_1^3(x_1^2x_2x_3)=x_1^2x_2x_3.$$

The following lemma, proved in~\cite{bshouty2025}, shows that for any $K$ and $L$, the set $\Gamma(K,L)$ is an abelian group. 

\begin{lemma}\label{GamisAb}\cite{bshouty2025}
   Let $\kappa_1,\ldots,\kappa_t\in \mathbb{N}$ such that $\kappa_i\ge 2$ and $\kappa_1\kappa_2\cdots \kappa_t=n$. Let $0\le \ell_{i,j}\le \kappa_j-1$ for $i=2,\ldots,t$ and $j=1,\ldots,i-1$. Then, for $K=(\kappa_i)$ and $L=(\ell_{i,j})$,    
   \begin{enumerate}
       \item 
   $\Gamma(K,L)$
   is an abelian group of size $|\Gamma(K,L)|=n$.
   \item\label{GamisAb2} The multiplication of two elements and finding the inverse of an element in $\Gamma(K,L)$ can be computed in time $\tilde O(\log^2 n)$. 
   \end{enumerate}  
\end{lemma}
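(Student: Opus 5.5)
We prove the two parts of Lemma~\ref{GamisAb} by identifying $\Gamma(K,L)$ with a quotient of $\ZZ^t$. Let $\Lambda\subseteq \ZZ^t$ be the subgroup (lattice) generated by the relation vectors
$$r_1=\kappa_1 e_1,\qquad r_i=\kappa_i e_i-\sum_{j<i}\ell_{i,j}e_j \quad (i=2,\ldots,t).$$
The abelian group presented by the stated generators and relations is, by definition, $\ZZ^t/\Lambda$. The matrix $R$ whose rows are $r_1,\ldots,r_t$ is lower triangular with diagonal $\kappa_1,\ldots,\kappa_t$. Hence $\det R=n\neq 0$, $\Lambda$ has full rank, and $|\ZZ^t/\Lambda|=n$. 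It therefore suffices to show that the monomials $x_1^{j_1}\cdots x_t^{j_t}$ with $0\le j_i<\kappa_i$ form a complete and irredundant set of coset representatives. The operation ``multiply exponent vectors and reduce modulo the relations'' is then exactly addition in $\ZZ^t/\Lambda$. This gives commutativity, associativity, identity and inverses for free, and shows that the size is $n$.

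\textbf{Complete.} Given any $v\in\ZZ^t$, we reduce coordinates from $t$ down to $1$. Write $v_t=q\kappa_t+j_t$ with $0\le j_t<\kappa_t$ and subtract $q r_t$. This fixes coordinate $t$ and alters only the coordinates $j<t$, since $R$ is triangular. Recursing on coordinate $t-1$, and so on down to $1$, yields a vector in the box $\prod_i[0,\kappa_i)$ congruent to $v$ modulo $\Lambda$.

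\textbf{Irredundant.} The box has exactly $n=|\ZZ^t/\Lambda|$ points, and every coset contains at least one of them. By counting, each coset contains exactly one. Alternatively, one can argue directly: if $u-u'=\sum c_i r_i$ with $u,u'$ in the box, look at the last index $i$ with $c_i\ne 0$. Coordinate $i$ of the difference is $c_i\kappa_i$, which has absolute value at least $\kappa_i$, a contradiction. The only subtle point in the whole argument is this identification of the monomial multiplication rule with the quotient. We must check that the reduction ``modulo $x_i^{\kappa_i}-\cdots$'' yields a well-defined result independent of reduction order. This follows because every reduction step stays within the same coset, and each coset has a unique box representative.

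For part~\ref{GamisAb2}, the operations are as follows. Multiplication adds the two exponent vectors, giving entries below $2\kappa_i$, and then runs the top-down reduction above. Inversion negates the vector and reduces it the same way. Done naively, each step $i$ updates up to $i-1$ lower coordinates, giving $O(t^2)$ arithmetic operations on numbers of $O(\log n)$ bits. The key observation is that $t\le\log_2 n$, since each $\kappa_i\ge2$. This already gives polylogarithmic time.

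To reach $\tilde O(\log^2 n)$, we must ensure the intermediate values stay small. Because the carries are $q\in\{0,1\}$ for addition and $q\in\{-1,0\}$ for negation, each coordinate $j$ receives a total adjustment of at most $\sum_{i>j}\ell_{i,j}<t\kappa_j$. So coordinate $j$ has $O(\log\kappa_j+\log t)$ bits, and each final division by $\kappa_j$ is cheap. Summing the costs gives $\sum_i\sum_{j<i}O(\log\kappa_j+\log\log n)=O(t\log n)=O(\log^2 n)$ bit operations, up to lower-order factors. We expect this bookkeeping of coefficient sizes to be the main, though routine, obstacle.
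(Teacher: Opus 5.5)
Your part~1 is correct. The paper itself gives no proof of this lemma; it imports it from \cite{bshouty2025}. Identifying $\Gamma(K,L)$ with $\ZZ^t/\Lambda$, using $\det R=n$, and showing that the box $\prod_i[0,\kappa_i)$ is a complete, irredundant set of coset representatives gives the abelian group structure and $|\Gamma(K,L)|=n$.

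The gap is in part~2, in the step that is supposed to bound intermediate values. The carries lie in $\{0,1\}$ (resp.\ $\{-1,0\}$ for inversion) only for the first coordinate you reduce, coordinate $t$. Each lower coordinate $i$ is reduced only after it has absorbed $\sum_{k>i}q_k\ell_{k,i}$ from the higher relations. Since $\ell_{k,i}$ can be as large as $\kappa_i-1$, the only available bound is $q_i\le 1+\sum_{k>i}q_k$, and this permits geometric growth. It actually occurs. Take $\kappa_i=2$ and $\ell_{i,j}=1$ for all $j<i$, and square $x_1x_2\cdots x_t$. Starting from $(2,\ldots,2)$, the carries read from $i=t$ downward are $1,1,2,3,4,6,9,14,\ldots$, growing like $(3/2)^{t-i}$. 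Already for $t=3$ you get $q_1=2$. For $t=4$ the total adjustment to coordinate~$1$ is $4>\sum_{i>1}\ell_{i,1}=3$. So your adjustment bound, the $O(\log\kappa_j+\log t)$ bit-length claim, and the $O(t\log n)$ bit-operation count all fail in general; coordinate $j$ can need $\Theta(t-j)$ extra bits.

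What is true, by downward induction with $S_i:=\sum_{k\ge i}q_k$ and $S_{t+1}=0$, is $q_i\le 1+S_{i+1}$, so $S_i\le 2^{t-i+1}-1$. Consequently, just before its reduction, coordinate $j$ is at most $(\kappa_j-1)(2^{t-j}+1)\le \kappa_j2^t\le 2n$, using $2^{t-1}\le\prod_{i\ne j}\kappa_i$. Inversion is symmetric. This is exactly the justification missing from your ``naive'' paragraph, where you asserted without proof that all numbers have $O(\log n)$ bits.

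With this bound, the straightforward top-down reduction uses $O(t^2)=O(\log^2 n)$ arithmetic operations on $O(\log n)$-bit integers. That is the claimed bound in the unit-cost model implicit in the paper's accounting; for instance, the paper charges $O(\log|G|)$ for evaluating $\Psi$. If you instead count bit operations, the naive reduction gives only $O(\log^3 n)$, and in the example above it really performs order $\log^3 n$ bit operations. Better bookkeeping cannot rescue your final refinement. You should drop it and replace it with the geometric carry bound together with the unit-cost reading of the time bound.
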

We refer to $\Gamma(K,L)$ the {\it monomial group generated by $K$ and $L$}. In this abelian group, $X=\{x_1,\ldots,x_t\}$ is a set of generators with the triangular relations $x_1^{\kappa_1}=1$ and $x_i^{\kappa_i}= x_1^{\ell_{i,1}}\cdots x_{i-1}^{\ell_{i,i-1}}$ for all $i=2,\ldots,t$. 

The following result follows from Kannan and Bachem~\cite{KannanB79} and Item~\ref{GamisAb2} in Lemma~\ref{GamisAb}. See details in~\cite{bshouty2025}. 
\begin{lemma}\label{ConBasis}
    There is a deterministic poly$(t,\log n)$ time algorithm that, for any monomial abelian group $\Gamma(K,L)$, returns integers $m_1|m_2|\cdots|m_r$ such that $\Gamma(K,L)$ is isomorphic to $\ZZ_{m_1}\times \ZZ_{m_2}\times \cdots\times\ZZ_{m_r}$. The algorithm also provides a basis $y_1,\ldots,y_r$ for $\Gamma(K,L)$.
\end{lemma}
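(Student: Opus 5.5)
The statement to prove is Lemma~\ref{ConBasis}. The plan is to reduce to Smith normal form of an integer relation matrix. The presentation of $\Gamma(K,L)$ is given by triangular relations, which I encode additively as an integer matrix. Consider the free abelian group $\ZZ^t$ with basis $e_1,\ldots,e_t$ corresponding to $x_1,\ldots,x_t$. Let $R$ be the $t\times t$ lower-triangular integer matrix whose $i$-th row is
$$\kappa_i e_i-\sum_{j<i}\ell_{i,j}e_j,$$
and whose first row is $\kappa_1 e_1$.

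First I verify that $\Gamma(K,L)\cong \ZZ^t/R\ZZ^t$, where $R\ZZ^t$ denotes the row lattice of $R$. By Lemma~\ref{GamisAb}, $\Gamma(K,L)$ is an abelian group of order $n$ generated by $x_1,\ldots,x_t$ satisfying these relations. Hence the map $e_i\mapsto x_i$ induces a surjection $\ZZ^t/R\ZZ^t\to\Gamma(K,L)$. The quotient has order $|\det R|=\prod\kappa_i=n$, because $R$ is triangular. So the surjection is a bijection, hence an isomorphism.

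Next I apply the Kannan--Bachem algorithm~\cite{KannanB79}. It computes, in time polynomial in $t$ and the bit length of the entries (which is $O(\log n)$), unimodular matrices $U,V$ with $URV=D=\mathrm{diag}(d_1,\ldots,d_t)$ and $d_1|d_2|\cdots|d_t$. Then $\ZZ^t/R\ZZ^t\cong\bigoplus \ZZ_{d_i}$. Discarding the $d_i=1$ gives the required $m_1|\cdots|m_r$, and these satisfy $\prod m_i=n$.

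For the basis, the change of coordinates given by $V$ means the images of the columns of $V^{-1}$ (or of $V$, depending on convention) form a basis. Concretely, I take $y_k=\prod_j x_j^{c_{jk}}$, where $(c_{jk})$ is the appropriate unimodular matrix, restricted to the indices with $d_k>1$. Each $c_{jk}$ has polynomial bit size, so every $y_k$ can be evaluated in $\Gamma(K,L)$ by repeated squaring. This uses $O(t\cdot\mathrm{poly}\log)$ multiplications, each costing $\tilde O(\log^2 n)$ by Item~\ref{GamisAb2}. Exponents can first be reduced modulo $n$, since every element has order dividing $n$, and a negative exponent is handled either by that reduction or via inversion.

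The main obstacle is keeping the bookkeeping correct. I must fix which of $U$, $V$, $V^{-1}$ gives the generators, since row versus column conventions flip it. I must also ensure that the $y_k$ have exact orders $m_k$ and generate a direct product. Both follow formally from $D=URV$: the column operations $V$ re-express the generators, the row operations $U$ only recombine relations, and so in the new generators the relation lattice is exactly $\bigoplus d_k\ZZ$. All steps are deterministic, so the whole algorithm runs in $\mathrm{poly}(t,\log n)$ time.
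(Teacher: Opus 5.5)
Your proposal is correct and follows essentially the paper's route. The paper just invokes the Kannan--Bachem Smith normal form algorithm together with the $\tilde O(\log^2 n)$ multiplication in $\Gamma(K,L)$ (Item~2 of Lemma~\ref{GamisAb}) and defers details to \cite{bshouty2025}; you supply those details via the relation matrix, the index count $|\det R|=n$ giving the isomorphism, and repeated squaring for the $y_k$. On the convention you left open: with $\Lambda$ the row lattice of $R$ and $URV=D$, the map $v\mapsto vV$ carries $\Lambda$ onto $\bigoplus_k d_k\ZZ e_k$, so $y_k=\prod_j x_j^{(V^{-1})_{kj}}$ comes from the $k$-th \emph{row} of $V^{-1}$.
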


There is also a randomized algorithm~\cite{KaltofenV05} with better time complexity, but the above lemma is sufficient for our purposes. 

\subsection{Algorithms in Abelian Group}
In~\cite{bshouty2025}, Bshouty proves the following result. 

\begin{lemma}\label{AlgGabelian21}
    Given a Cayley multiplication table of an abelian group $G$, where each entry in the table can be accessed in constant time, there exists a deterministic algorithm {\bf Generators} (in the FS-model) that runs in time $\tilde O(|G|)$ and a randomized algorithm {\bf Random-Generators} (in the PS-model) that runs in time $\tilde O(\sqrt{|G|})$. These algorithms provide 
    \begin{enumerate}
        \item A set $A=\{a_1,\ldots,a_t\}$ of generators for $G$ of size at most $\log |G|$.
        \item Integers $K=(k_i)$, $i\in [t]$, where $k_i\ge 2$, $\prod_{i\in[t]}k_i=|G|$ along with integers $L=(\lambda_{i,j})$, where $i\in[t],j\in [i-1]$ and $0\le \lambda_{i,j}\le k_i-1$. 
        \item $K$ and $L$ satisfy 
        \begin{eqnarray*}  a_i^{k_i}=a_1^{\lambda_{i,1}}\cdots a_{i-1}^{\lambda_{i,i-1}}.
        \end{eqnarray*}
    \end{enumerate}
Additionally,
    $G$ is isomorphic to $\Gamma(K,L)$ via the isomorphism
    $$\Psi(x_1^{j_1}\cdots x_t^{j_t})=a_1^{j_1}\cdots a_t^{j_t},$$ for every $0\le j_i\le k_i-1$ and $i\in [t]$.
\end{lemma}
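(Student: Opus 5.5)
The plan is to build the generators one at a time while maintaining an explicit normal form for the subgroup generated so far. Set $H_0=\{e\}$ and $H_i=\langle a_1,\ldots,a_i\rangle$. The invariant is that $|H_i|=k_1\cdots k_i$ and that every $h\in H_i$ can be written uniquely as $a_1^{j_1}\cdots a_i^{j_i}$ with $0\le j_s\le k_s-1$.

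Given $H_i\ne G$, pick $a=a_{i+1}\notin H_i$ and let $k_{i+1}$ be the least $k\ge 1$ with $a^k\in H_i$. Then $k_{i+1}=[H_{i+1}:H_i]\ge 2$. Writing $a^{k_{i+1}}$ in the normal form of $H_i$ gives the triangular relation with exponents $0\le\lambda_{i+1,j}\le k_j-1$.

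The invariant is preserved because every element of $H_{i+1}$ is $h\,a^{j}$ with $h\in H_i$. The exponent $j$ can be reduced below $k_{i+1}$ using the relation, and the cosets $H_i a^{j}$ for $0\le j<k_{i+1}$ are distinct. Since each index is at least $2$, at most $\log|G|$ generators are produced. When $H_t=G$ we get $\prod k_i=|G|$. The isomorphism claim then follows:
\begin{itemize}
\item the $a_i$ satisfy the defining relations of $\Gamma(K,L)$, so $\Psi$ is a homomorphism (equivalently, multiplication of normal forms reduces the same way on both sides);
\item $\Psi$ is surjective because $A$ generates $G$;
\item $|\Gamma(K,L)|=|G|$ by Lemma~\ref{GamisAb}, so $\Psi$ is bijective.
\end{itemize}

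For the deterministic \textbf{Generators}, store $H_i$ in a dictionary mapping each element to its exponent vector. To choose $a_{i+1}$, scan $G$ for an element not in the dictionary. To find $k_{i+1}$, compute $a,a^2,\ldots$ until a power lands in the dictionary, which costs $k_{i+1}$ steps. Building the table of $H_{i+1}$ costs $|H_{i+1}|$. Summing over the at most $\log|G|$ rounds gives $\tilde O(|G|)$.

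For \textbf{Random-Generators} in the PS-model, we cannot list $H_i$. However, we know $|H_i|=k_1\cdots k_i$ and can sample uniform elements of $H_i$ by drawing random exponents. Membership and discrete logarithm in $H_i$ can then be done by a birthday collision. Take about $\sqrt{|H_i|}\log|G|$ random $h_r\in H_i$, forming the set $\{x h_r\}$, and the same number of random $h'_s\in H_i$. If $x\in H_i$ the two sets collide with high probability, giving $x=h'_s h_r^{-1}$ with known exponents. If $x\notin H_i$ they never collide, since distinct cosets are disjoint.

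The relative order $k_{i+1}$ of $a$ modulo $H_i$ is found by a baby-step/giant-step variant. Compare $\{a^{j}h_r : j<s\}$ with $\{a^{-s\ell}h'_r\}$, doubling the guess $s$ because $|G|$ is unknown. This costs about $\sqrt{k_{i+1}|H_i|}\,\mathrm{polylog}=\tilde O(\sqrt{|H_{i+1}|})$.

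To find $a_{i+1}$ and to detect termination, use a pool of $O(\log|G|)$ fresh uniform samples from $G$. If $H_i\ne G$, each sample lies outside $H_i$ with probability at least $1/2$. So when all samples test as members, we stop with $H_i=G$ with high probability.

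The main obstacle is this randomized step. We must keep every collision test within $\tilde O(\sqrt{|G|})$ without knowing $|G|$, and bound the total error over all $O(\log|G|)$ rounds by a union bound. I would handle it through the doubling on $s$ and by amplifying each test by $O(\log|G|)$ repetitions.
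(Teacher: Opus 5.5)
The paper gives no proof of this lemma. It is quoted from \cite{bshouty2025}, so there is no internal argument to match, and your proposal is a self-contained reconstruction. It is correct in substance.
\begin{itemize}
\item The chain $H_0\subset H_1\subset\cdots$, with $k_{i+1}=[H_{i+1}:H_i]$ the relative order of $a_{i+1}$ modulo $H_i$, yields the triangular relations, unique normal forms, $t\le\log|G|$, and the isomorphism with $\Gamma(K,L)$.
\item The deterministic dictionary version is $\tilde O(|G|)$ because $|H_i|$ at least doubles each round.
\item The randomized version is a legitimate way to get $\tilde O(\sqrt{|G|})$ in the PS-model. It uses collision-based membership and discrete log in $H_i$, with uniform samples of $H_i$ drawn via random exponent vectors. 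It adds a baby-step/giant-step search costing $\tilde O(\sqrt{k_{i+1}|H_i|})=\tilde O(\sqrt{|H_{i+1}|})$.
\end{itemize}
Compared with the bare citation, your argument also shows why $\prod_i k_i$ is a valid value for the unknown $|G|$. It equals $|H_t|$, and your termination test makes $H_t=G$ with high probability. This is exactly what the tester of Section~\ref{ATAG} relies on when it reads $|G|$ off the output of \textbf{Random-Generators}.

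A few details in the randomized part need tightening.
\begin{itemize}
\item[(i)] In the PS-model you cannot form $a^{-s\ell}$ without knowing the order of $a$. Instead compare $\{a^jh_r\}_{0\le j<s}$ with $\{a^{s\ell}h'_{r'}\}_{1\le \ell\le s}$. A collision then certifies $a^{s\ell-j}\in H_i$ with $1\le s\ell-j\le s^2$.
\item[(ii)] A collision certifies only a multiple of $k_{i+1}$, not $k_{i+1}$ itself. Either take the smallest value found (no collision is possible while $s^2<k_{i+1}$), or strip each prime $p$ from the found multiple $k$ by testing whether $a^{k/p}\in H_i$.
\item[(iii)] Since $|G|$ is unknown, index the amplification and pool sizes by the round rather than by $\log|G|$. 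For example, allow error $\delta/(i+1)^2$ in round $i$.
\end{itemize}
Finally, your normal form gives $0\le\lambda_{i,j}\le k_j-1$, which is what the definition of $\Gamma(K,L)$ requires. The bound $k_i-1$ in the statement is a typo. For instance, in $\ZZ_8$ (written multiplicatively) with $a_1=2$ and $a_2=3$, one gets $k_1=4$, $k_2=2$ and $a_2^{2}=a_1^{3}$, so $\lambda_{2,1}=3>k_2-1$.
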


\color{black}
\section{A One-Sided Tester for Abelian Group}\label{Sec4}

In this section, we prove the following two theorems in the FS-model.

{\bf Why not the PS-model?}
The one-sided tester presented in this section is stated only in the FS-model. In the PS-model, the size of the group is unknown and the algorithm only receives uniformly random elements from an oracle. Since a one-sided tester must accept every abelian group with probability $1$, its behavior on Accept-instances must be correct for every possible sequence of oracle outputs. However, after any finite sequence of samples, there is always a positive probability that the oracle continues returning only previously seen elements. Therefore, no deterministic procedure can certify that all elements of the group have been observed and hence cannot determine $|G|$ with certainty. 

\begin{theorem}\label{kkdd}
    There exists a one-sided tester for abelian groups that runs in time $\tilde O(|G|+1/\epsilon)$. 
\end{theorem}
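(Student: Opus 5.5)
The tester runs the deterministic algorithm \textbf{Generators} of Lemma~\ref{AlgGabelian21} on $(G,*)$, with a step budget. If the table is an abelian group it finishes in time $\tilde O(|G|)$. We reject if it exceeds the budget, fails, or returns output that is not well formed: $t\le\log|G|$, $k_i\ge 2$, $\prod k_i=|G|$, $0\le\lambda_{i,j}\le k_j-1$. Otherwise it returns $A=\{a_1,\dots,a_t\}$, $K$ and $L$, and we form the monomial group $\Gamma(K,L)$. By Lemma~\ref{GamisAb} this is an abelian group of order $|G|$, and multiplication in it costs $\tilde O(\log^2|G|)$.

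Next we tabulate the map $\Psi(x_1^{j_1}\cdots x_t^{j_t})=a_1^{j_1}*\cdots*a_t^{j_t}$ on all of $\Gamma(K,L)$, evaluating the products left to right with a fixed bracketing. We enumerate the exponent vectors $(j_1,\dots,j_t)$ in mixed-radix order, so that each value is one $*$-product away from an already computed value: for example $\Psi(x_1^{j_1}\cdots x_i^{j_i+1})$ is obtained by multiplying a stored prefix value by $a_i$. This gives $O(|G|)$ oracle queries and $\tilde O(|G|)$ time. Checking that $\Psi$ is a bijection onto $G$ takes a further $O(|G|)$ time with a marker array; if it is not, we reject. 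Finally we pick $O(1/\epsilon)$ uniform pairs $(x,y)\in\Gamma(K,L)^2$ and check $\Psi(x\cdot y)=\Psi(x)*\Psi(y)$, using the table and a single oracle query per pair. We reject on any violation and accept otherwise. The total time is $\tilde O(|G|+1/\epsilon)$.

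\emph{One-sidedness.} If $(G,*)$ is an abelian group, Lemma~\ref{AlgGabelian21} is deterministic and guarantees that the output satisfies the well-formedness conditions and that $\Psi$ as defined is an isomorphism. Because $*$ is associative, the bracketing used during tabulation does not matter, so the table we build is exactly this isomorphism. Hence every check passes and we accept with probability $1$.

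\emph{Soundness.} Suppose we accept with probability above $1/3$. The run of \textbf{Generators} and the tabulation are deterministic, so once they pass, $\Psi$ is a fixed bijection $\Gamma(K,L)\to G$. Define $\circ$ on $G$ by $a\circ b=\Psi(\Psi^{-1}(a)\cdot\Psi^{-1}(b))$. By Lemma~\ref{TrivR}, $(G,\circ)$ is a group isomorphic to $\Gamma(K,L)$, hence abelian. Since $\Psi$ is a bijection, drawing $(x,y)$ uniformly from $\Gamma(K,L)^2$ is the same as drawing $(\Psi(x),\Psi(y))$ uniformly from $G^2$. The test therefore rejects with probability equal to the fraction $\delta$ of pairs $(a,b)\in G^2$ with $a*b\ne a\circ b$. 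If $(G,*)$ is $\epsilon$-far from every abelian group on $G$, then $\delta\ge\epsilon$, and $O(1/\epsilon)$ samples catch a violation with probability at least $2/3$, a contradiction.

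The step needing the most care is the interface with Lemma~\ref{AlgGabelian21}. That lemma only guarantees its behaviour on genuine abelian groups, so on arbitrary magmas we need the budget cutoff and the syntactic checks on the output. The soundness argument then relies only on bijectivity of $\Psi$ and the consistency check, not on any correctness claim about the output. A second point is to make the tabulation genuinely $\tilde O(|G|)$, rather than $|G|\log|G|$ queries per entry, via the incremental enumeration described above.
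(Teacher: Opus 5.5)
Your proposal is correct and follows the paper's proof essentially step for step. Like the paper, you run the deterministic \textbf{Generators} with a time budget and syntactic output checks, tabulate $\Psi$, and reject if it is not bijective; you then use Lemma~\ref{TrivR} to turn $\epsilon$-farness into an $\epsilon$ rejection probability for each of the $O(1/\epsilon)$ homomorphism checks. The only difference is cosmetic: you fill the table incrementally in mixed-radix order with one $*$-query per entry, whereas the paper evaluates each entry separately by repeated squaring in $O(\log|G|)$ queries, and both give $\tilde O(|G|)$.
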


\begin{theorem}\label{kkdd2}
    Let ${\cal G}$ be a subclass of abelian groups that is closed under isomorphism. If there exists a deterministic algorithm that can determine, in time $T$, whether any abelian group of the form $\ZZ_{m_1}\times \cdots\times \ZZ_{m_r}$ belongs to ${\cal G}$, then there exists a one-sided tester for ${\cal G}$ in the PS-model that runs in time $T+\tilde O({|G|}+1/\epsilon)$ and makes $\tilde O({|G|}+1/\epsilon)$ queries.

    In particular, there exist testers that run in time $\tilde O({|G|}+1/\epsilon)$ for the classes: abelian groups of rank at most $k$, abelian $p$-groups, vector spaces over $\ZZ_p$ and cyclic groups.
\end{theorem}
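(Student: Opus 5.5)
The statement immediately above is Theorem~\ref{kkdd2}. Note a subtlety first. Although it says ``PS-model'', the discussion preceding Theorem~\ref{kkdd} shows that one-sided testers are only meaningful in the FS-model. I will therefore prove it in the FS-model, where $|G|$ and all elements of $G$ are available.

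The tester extends the one for Theorem~\ref{kkdd}.
\begin{enumerate}
\item Run the deterministic algorithm \textbf{Generators} of Lemma~\ref{AlgGabelian21} on $(G,*)$ in time $\tilde O(|G|)$. If it fails, or its output violates the promised format ($t\le\log|G|$, $k_i\ge 2$, $\prod k_i=|G|$), reject. Otherwise we have $A$, $K$ and $L$. Build $\Gamma(K,L)$ with the relations $x_i^{k_i}=x_1^{\lambda_{i,1}}\cdots x_{i-1}^{\lambda_{i,i-1}}$. Also check directly on the table that the $a_i$ satisfy these relations; evaluating powers takes $poly(\log|G|)$ queries.
\item Compute the whole table of $\Psi(x_1^{j_1}\cdots x_t^{j_t})=a_1^{j_1}\cdots a_t^{j_t}$. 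Enumerate the monomials in mixed-radix order, so that each value is obtained from a previously computed one by a single $*$-multiplication by some $a_i$. Under a fixed bracketing convention (left to right), this costs $\tilde O(|G|)$ time. Reject if $\Psi$ is not a bijection onto $G$.
\item Apply Lemma~\ref{ConBasis} to get $m_1|\cdots|m_r$ with $\Gamma(K,L)\cong\ZZ_{m_1}\times\cdots\times\ZZ_{m_r}$, in $poly(\log|G|)$ time. Run the assumed decision procedure in time $T$ and reject if the group is not in ${\cal G}$.
\item Sample $O(1/\epsilon)$ uniform pairs $(x,y)\in\Gamma(K,L)^2$. For each, check $\Psi(x\cdot y)=\Psi(x)*\Psi(y)$, computing products in $\Gamma(K,L)$ via Lemma~\ref{GamisAb}. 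Reject on any violation; otherwise accept.
\end{enumerate}
The total time is $T+\tilde O(|G|+1/\epsilon)$.

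\textbf{Completeness.} Suppose $(G,*)\in{\cal G}$. By Lemma~\ref{AlgGabelian21}, \textbf{Generators} succeeds and $\Psi$ is an isomorphism, since in a group the bracketing is irrelevant. Hence $\Psi$ is bijective and every multiplicativity check passes. Moreover, $\ZZ_{m_1}\times\cdots\times\ZZ_{m_r}\cong\Gamma(K,L)\cong G$, and ${\cal G}$ is closed under isomorphism, so step~3 accepts. Every step is deterministic, so the tester accepts with probability $1$.

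\textbf{Soundness.} Suppose $(G,*)$ is $\epsilon$-far from ${\cal G}$ and the tester reaches step~4. Then $\Psi$ is a bijection and $\Gamma(K,L)\in{\cal G}$. By Lemma~\ref{TrivR}, the operation $a\circ b=\Psi(\Psi^{-1}(a)\cdot\Psi^{-1}(b))$ makes $(G,\circ)$ a group isomorphic to $\Gamma(K,L)$, hence $(G,\circ)\in{\cal G}$. Therefore
\[
|\{(a,b): a*b\ne a\circ b\}|\ge\epsilon|G|^2 .
\]
Since $\Psi$ is a bijection, pairs $(a,b)$ correspond to pairs $(x,y)=(\Psi^{-1}(a),\Psi^{-1}(b))$. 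Under this correspondence, $a*b\ne a\circ b$ exactly when $\Psi(x)*\Psi(y)\ne\Psi(x\cdot y)$. So a uniform pair $(x,y)$ fails the check with probability at least $\epsilon$, and $O(1/\epsilon)$ samples catch a failure with probability at least $2/3$.

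\textbf{The main obstacle.} Lemma~\ref{AlgGabelian21} is stated only for genuine abelian groups, while the input magma may be arbitrary. We must make sure that on an arbitrary magma the algorithm either halts within $\tilde O(|G|)$ time, which we enforce with a time cutoff followed by rejection, or returns data we can sanity-check as in step~1. Beyond that, soundness needs only that $\Psi$ is some well-defined map computed by a fixed procedure. The argument above never uses properties of $*$ other than the sampled equalities, so it is robust. For the listed classes, the membership test is trivial given $m_1|\cdots|m_r$: rank at most $k$ means $r\le k$; a $p$-group means every $m_i$ is a power of $p$; a vector space over $\ZZ_p$ means every $m_i=p$; cyclic means $r=1$. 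Each takes $poly(\log|G|)$ time.
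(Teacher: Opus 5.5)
Your proposal is correct and is essentially the paper's proof. The paper likewise takes the one-sided tester of Theorem~\ref{kkdd}, which runs the deterministic \textbf{Generators}, builds the $\Psi$ table, rejects unless it is bijective, and makes $O(1/\epsilon)$ homomorphism checks; after the format check it adds the Lemma~\ref{ConBasis} decomposition and the membership test, and gets soundness from Lemma~\ref{TrivR} applied to $(G,\circ)\cong\Gamma(K,L)\in{\cal G}$. Your reading of ``PS-model'' as a slip for the FS-model agrees with the paper's own remark at the start of that section, and your mixed-radix table construction and extra relation check are harmless variations.
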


We first prove Theorem~\ref{kkdd}.

Given a binary operation $*:G^2\to G$, where each entry in the table can be accessed in constant time, we aim to test whether $(G,*)$ is an abelian group or $\epsilon$-far from being an abelian group. 

Consider the tester in Algorithm~\ref{AGOST}. The tester runs the deterministic Algorithm {\bf Generators} from Lemma~\ref{AlgGabelian21}. By Lemma~\ref{AlgGabelian21}, if $G$ is an abelian group, then the algorithm provides:
\begin{enumerate}
        \item A set $A=\{a_1,\ldots,a_t\}$ of generators for $G$ of size at most $\log |G|$.
        \item Integers $K=(k_i)$, where $i\in [t]$ such that $k_i\ge 2$, $\prod_{i\in[t]}k_i=|G|$, along with integers $L=(\lambda_{i,j})$, where $i\in[t],j\in [i-1]$ and $0\le \lambda_{i,j}\le k_i-1$. 
\end{enumerate}
Additionally, $G$ is isomorphic to $\Gamma(K,L)$ via the isomorphism
    $$\Psi(x_1^{j_1}\cdots x_t^{j_t})=a_1^{j_1}*a_2^{j_2}*\cdots *a_t^{j_t}$$ for every $0\le j_i\le k_i-1$ and $i\in [t]$.

When $G$ is not an abelian group, we cannot guarantee any of the conditions above. 
If the algorithm {\bf Generators} runs more than time $\tilde O(|G|)$ time, or if $K=(k_i)$ and $L=(\lambda_{i,j})$ do not satisfy item~\ref{OS2}, or if $|A|>\log |G|$, then $(G,*)$ is not an abelian group, and the tester rejects. See steps~\ref{LargeG} and~\ref{AKL}.

Otherwise, the tester computes $\Psi(z)$ for all $z\in \Gamma(K,L)$, build a table $\Psi:\Gamma(K,L)\to G$, and verifies whether $\Psi$ is bijective function. If not, the tester rejects. See step~\ref{PsiNb}.

Here, the tester computes $\Psi$ as follows
$$\Psi(x_1^{j_1}\cdots x_t^{j_t})=a_1^{j_1}*(a_2^{j_2}*(\cdots (a_{t-1}^{j_{t-1}}*a_t^{j_t}))\cdots)$$ where the exponent is handled recursively as follows
    $$a^m = 
\begin{cases} 
    (a^{m/2})*(a^{m/2}) & \text{if } m \text{ is even}, \\
    a * (a^{(m-1)/2}*a^{(m-1)/2}) & \text{if } m \text{ is odd}.
\end{cases}$$

If $\Psi$ is a bijective function, then the tester chooses $O(1/\epsilon)$ pairs $(x,y)\in \Gamma(K,L)^2$ uniformly at random and, using the table, verifies whether $$\Psi(xy)=\Psi(x)*\Psi(y).$$ If this condition holds for all tested pairs, the tester accepts; otherwise, it rejects.

\begin{algorithm}
\caption{One-Sided Tester for Abelian Group}\label{AGOST}
\begin{algorithmic}[1]
    \STATE Run the deterministic algorithm \textbf{Generators}$(G,*)$. 
    \STATE\label{LargeG} {\bf if} {the algorithm runs more than $\tilde{O}(|G|)$} steps {\bf then} Reject.
        \STATE {\bf else} By Lemma~\ref{AlgGabelian21}  obtain:
    \begin{enumerate}
        \item\label{OS1} A set $A = \{a_1, \ldots, a_t\}$ of size at most $\log |G|$.
        \item\label{OS2} Integers $K = (k_i)$ for $i \in [t]$, where $k_i \ge 2$ and $\prod_{i\in [t]}k_i=|G|$ along with integers $L = (\lambda_{i,j})$ for $i \in [t], j \in [i-1]$, where $0 \le \lambda_{i,j} \le k_i - 1$.
    \end{enumerate}
    
    \STATE\label{AKL} {\bf if} {the output $(A,K, L)$ does not satisfy the above conditions \ref{OS1}-\ref{OS2}} {\bf then} Reject
        
    \STATE Construct a table of values for the map $\Psi: \Gamma(K, L) \to G$, where for every $j\in \prod_{i\in [t]}[k_i]$: 
    \[
    \Psi(x_1^{j_1} \cdots x_t^{j_t}) = a_1^{j_1}*(a_2^{j_2}*(\cdots (a_{t-1}^{j_{t-1}}*a_t^{j_t}))\cdots).
    \] 
    \STATE\label{PsiNb} {\bf if} {$\Psi$ is not bijective} {\bf then} Reject.
    \STATE\label{Tfacirestar01}\label{FFor} {\bf Repeat} the following steps \( O(1/\epsilon) \) times:
        \STATE\ \ \  Choose $(x, y) \in \Gamma(K,L)^2$ uniformly at random.        \STATE\label{Tfacirestar02}\ \ \  \textbf{if} \[\Psi(xy)\neq \Psi(x) * \Psi(y)   \]\STATE \label{LReject}\textbf{\hspace{.09in} then} Reject
    \STATE Accept.
\end{algorithmic}
\end{algorithm}

We now prove its correctness.

\noindent
{\bf Completeness:} If $(G,*)$ is an abelian group, then by Lemma~\ref{AlgGabelian21}, conditions~\ref{OS1} and~\ref{OS2} in the tester hold, and $\Gamma(K,L)$ is isomorphic to $G$ via the isomorphism $\Psi$. Therefore, $\Psi$ is bijective, and $\Psi(xy)=\Psi(x)*\Psi(y)$. Hence, the tester accepts. 

\vspace{.1in}
\noindent
{\bf Soundness:} Suppose $(G,*)$ is $\epsilon$-far from every abelian group. If the tester does not reject in step~\ref{LargeG} or step~\ref{AKL}, then, by Lemma~\ref{GamisAb}, $\Gamma(K,L)$ is an abelian group of size $|\Gamma(K,L)|=|G|$. 
If the tester does not reject in step~\ref{PsiNb}, then $\Psi$ is bijective. By Lemma~\ref{TrivR}, the magma $(G,\circ)$ where $a\circ b=\Psi(\Psi^{-1}(a)\Psi^{-1}(b))$ is an abelian group isomorphic to $\Gamma(K,L)$. 
Since $(G,*)$ is $\epsilon$-far from any abelian group, it is also $\epsilon$-far from $(G,\circ)$. Therefore, since $\Phi$ is bijective, we have
\begin{eqnarray*}
    \Pr_{x,y\sim \Gamma(K,L)}[\Psi(xy)\not=\Psi(x)*\Psi(y)]&=&\Pr_{a,b\sim G}[\Psi(\Psi^{-1}(a)\Psi^{-1}(b))\not=a*b]\\
&=&\Pr_{a,b\sim G}[a\circ b\not= a*b]\ge \epsilon.
\end{eqnarray*}

Thus, with probability at least
$1-(1-\epsilon)^{O(1/\epsilon)}\ge 2/3,$
the tester rejects in step~\ref{LReject}.  

This completes the proof of the tester's correctness.

\vspace{.1in}
\noindent
{\bf Complexity}: By Lemma~\ref{AlgGabelian21}, the algorithm {\bf Generators} runs in time $\tilde O(|G|)$. Since $j\in \prod_{i\in[t]}[k_i]$ and $\prod_{i\in [t]}k_i=|G|$ computing $a_1^{j_1}*(a_2^{j_2}*(\cdots (a_{t-1}^{j_{t-1}}*a_t^{j_t}))\cdots)$ takes at most time 
$$(t-1)+\sum_{i\in [t]}\lceil \log k_i\rceil \le 2t+\log|G|\le 3\log|G|.$$
Thus, building the table takes time $\tilde O(|G|)$. 
By Lemma~\ref{GamisAb}, computing the product $xy$ takes time $O(\log^2|G|)$. 
Hence, this final test takes time $O(\log^2|G|/\epsilon)=\tilde O(1/\epsilon)$. Therefore, the time complexity of the tester is $\tilde O(|G|+1/\epsilon)$.

This completes the proof of Theorem~\ref{kkdd}. 

For testing subclasses of abelian groups ${\cal G}$ that are closed under isomorphism, Theorem~\ref{kkdd2}, we modify the procedure as follows: After step~\ref{AKL}, we run the deterministic algorithm from Lemma~\ref{ConBasis}, which returns integers $m_1 | m_2 | \cdots | m_r$ such that $\Gamma(K,L)$ is isomorphic to $G':=\ZZ_{m_1}\times \ZZ_{m_2}\times \cdots\times\ZZ_{m_r}$. 
We then run the deterministic algorithm to verify whether $G'\in {\cal G}$. If not, then we reject.  

The same analysis applies, since by Lemma~\ref{TrivR}, $(G,\circ)$ is isomorphic to $\Gamma(K,L)$. 

\section{A Tester for Abelian Groups}
 \label{ATAG}
In this section, we prove
\begin{theorem}\label{TestAbelianGroup}
    There exists a tester for abelian groups that runs in time $\tilde O(\sqrt{|G|}+1/\epsilon)$. 
\end{theorem}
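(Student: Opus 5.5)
We follow the structure of the one-sided tester in Algorithm~\ref{AGOST}, but replace its two expensive steps. Instead of the deterministic \textbf{Generators}, we run \textbf{Random-Generators} from Lemma~\ref{AlgGabelian21}. We cap its running time at $\tilde O(\sqrt{|G|})$ and reject if it exceeds the cap. We also reject if the output $(A,K,L)$ violates conditions~\ref{OS1}--\ref{OS2}. In the PS-model $|G|$ is unknown, so the check $\prod k_i=|G|$ cannot be performed directly. Instead we set $n:=\prod_i k_i$ as the presumed size. We then verify this guess with a birthday-paradox collision count on $O(\sqrt n)$ random samples, rejecting if the number of collisions deviates from the one expected for a set of size $n$ by more than a constant factor. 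Completeness follows from Lemma~\ref{AlgGabelian21}: with probability $\ge 2/3$ the algorithm succeeds, $\Psi$ is an isomorphism, and every subsequent check passes. The reason is that $\Psi(xy)=\Psi(x)*\Psi(y)$ holds exactly, and $\Psi$ is bijective with uniform image.

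Building the full table of $\Psi$ is unaffordable, so we replace the bijectivity check (step~\ref{PsiNb}) by a test that $\Psi$ is $\Theta(1)$-close to a bijection. We sample $s=\Theta(\sqrt n)$ uniform $z_1,\dots,z_s\in\Gamma(K,L)$. We compute each $\Psi(z_i)$ in $O(\log n)$ queries by repeated squaring, and each product in $\Gamma(K,L)$ in $\tilde O(\log^2 n)$ by Lemma~\ref{GamisAb}. We then count pairs $i<j$ with $z_i\ne z_j$ but $\Psi(z_i)=\Psi(z_j)$. If $\Psi$ is far from injective, say its image has size $<(1-\delta)n$, then collision probability arguments give $\Omega(1)$ expected such pairs among $\sqrt n$ samples. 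So we reject when any such collision appears, or when the number exceeds a threshold in a two-sided version. Finally we run the $O(1/\epsilon)$ homomorphism checks $\Psi(xy)\stackrel{?}{=}\Psi(x)*\Psi(y)$ on random $x,y\in\Gamma(K,L)$. The total time is $\tilde O(\sqrt{|G|}+1/\epsilon)$.

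The main obstacle is soundness: we must show that if $\Psi$ is only close to bijective and $\Pr_{x,y}[\Psi(xy)\ne\Psi(x)*\Psi(y)]$ is small (say $<\epsilon/c$), then $(G,*)$ is $\epsilon$-close to an abelian group. The $\sqrt n$ test certifies closeness only to a constant $\delta$, not to $\epsilon$, so we must exploit the algebra to boost it. I would first prove that a near-homomorphism from a group forces $\Psi$ to be exactly injective. Suppose $\Psi(z)=\Psi(z')$ with $z\neq z'$, and put $w=z'z^{-1}\ne 1$. Then for most $x$ we get $\Psi(xw)=\Psi(x)*\Psi(w)$ and $\Psi(x)=\Psi(xz^{-1}z)$. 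Chaining these through random shifts $x$, a self-correction argument shows $\Psi(xw)=\Psi(x)$ for a large fraction of $x$. Collisions would then occur with constant density, which the $\sqrt n$ collision test detects. Hence $\Psi$ is close to injective on all but an $O(\epsilon)$ fraction.

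Next, let $\Psi'$ be a bijection agreeing with $\Psi$ outside a set $B$. With $|B|/n$ small, $\Psi'$ is injective on the complement of $B$, and we fill the remaining values arbitrarily to obtain a bijection. Then $(G,\circ)$, defined via $\Psi'$ and Lemma~\ref{TrivR}, is an abelian group. It differs from $*$ on at most $\Pr[\Psi(xy)\ne\Psi(x)*\Psi(y)]+3|B|/n$ of pairs. The delicate point is bounding $|B|/n$ by $O(\epsilon)$, not merely $O(\delta)$. I would try to get this from the homomorphism test itself. If $\Psi$ is non-injective on a set of density $\beta$, the fibers collapse, and then $a*b$ for $a,b$ in the image must disagree with a uniform-looking product on an $\Omega(\beta)$ fraction. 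Equivalently, the image-size deficiency $n-|\Psi(\Gamma)|$ forces $\Omega(\beta)$ of the pairs $(a,b)\in G^2$ to land outside the multiplicative structure. So $\epsilon$-farness makes either the collision test or the homomorphism test reject.
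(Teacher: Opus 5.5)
Your architecture matches the paper's: \textbf{Random-Generators}, a $\Theta(\sqrt n)$ birthday test for collisions of $\Psi$, $O(1/\epsilon)$ homomorphism checks, and a bijection agreeing with $\Psi$ off the bad set fed into Lemma~\ref{TrivR}. Your final estimate is essentially the paper's Case~II. The gap is the step you yourself call delicate. Let $\Gamma_*$ be the set of $z\in\Gamma=\Gamma(K,L)$ that share their $\Psi$-value with another element. The birthday test only guarantees $|\Gamma_*|<|\Gamma|/8$, so you must show that $|\Gamma_*|\ge\epsilon|\Gamma|/8$ forces the homomorphism test to reject.

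Your route to this does not work. As a claim about maps $\Psi:\Gamma\to G$, ``a near-homomorphism is exactly injective'' is false. Take an isomorphism $\phi$, and set $\Psi=\phi$ except $\Psi(z')=\phi(z)$ for some $z'\ne z$. Then $\Psi$ is not injective, yet $\Psi(xy)=\Psi(x)*\Psi(y)$ fails only when $x$, $y$ or $xy$ equals $z'$, i.e.\ on at most $3|\Gamma|$ of the $|\Gamma|^2$ pairs. The underlying error is that a small \emph{global} violation rate gives no control over a fixed row or column, such as that of $z$, $z'$ or $w=z'z^{-1}$. So ``for most $x$, $\Psi(xw)=\Psi(x)*\Psi(w)$'' is unjustified, and the chaining cannot produce constant-density collisions. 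The closing paragraph (``fibers collapse'', ``uniform-looking product'') does not supply a replacement argument.

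The missing idea is to translate a colliding pair (Lemma~\ref{OROR}). Suppose $x_1\ne x_2$ and $\Psi(x_1)=\Psi(x_2)$. Then for every $y$, either $(x_1,y)$ or $(x_2,y)$ is a violation, or $\Psi(x_1y)=\Psi(x_2y)$ with $x_1y\ne x_2y$, i.e.\ $x_1y\in\Gamma_*$. Since $y\mapsto x_1y$ is injective, the last case occurs for fewer than $|\Gamma|/8$ values of $y$. Splitting each $\Psi$-fiber inside $\Gamma_*$ into disjoint pairs gives at least $|\Gamma_*|/3$ pairs, and these have disjoint rows. Hence $\Pr_{x,y}[\Psi(xy)\ne\Psi(x)*\Psi(y)]\ge\frac{7}{24}\cdot|\Gamma_*|/|\Gamma|$, which is $\Omega(\epsilon)$ when $|\Gamma_*|\ge\epsilon|\Gamma|/8$.

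This is how the constant-accuracy birthday test gets boosted to accuracy $\epsilon$: it is needed only to ensure that translates of a colliding pair rarely collide again. Once $|\Gamma_*|<\epsilon|\Gamma|/8$, your Case~II argument finishes the proof.

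A smaller issue: your bijection $\Psi'$ requires $\prod_i k_i=|G|$ exactly, but your collision-count estimate only certifies $|G|=\Theta(n)$. The paper checks $\prod_i k_i=|G|$ directly, with $|G|$ known in the FS-model or taken from the output of \textbf{Random-Generators}.
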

\subsection{The Tester}
Consider the following tester.

\begin{algorithm}[H]
\caption{Tester For Abelian Groups}\label{AlgTest}
\begin{algorithmic}[1]
    \STATE Run the algorithm \textbf{Random-Generators}$(G,*)$.
    \STATE\label{LargeGR} {\bf if} {the algorithm runs in time more than $\tilde{O}(\sqrt{|G|})$} {\bf then} Reject.
        \STATE {\bf else} By Lemma~\ref{AlgGabelian21} we obtain
    \begin{enumerate}
        \item\label{OS1R} A set $A = \{a_1, \ldots, a_t\}$ of size at most $\log |G|$.
        \item\label{OS2R} Integers $K = (k_i)$ for $i \in [t]$, where $k_i \ge 2$, $\prod_{i\in [t]}k_i=|G|$ along with integers $L = (\lambda_{i,j})$ for $i \in [t], j \in [i-1]$, where $0 \le \lambda_{i,j} \le k_i - 1$.
    \end{enumerate}
    
    \STATE\label{AKLR} {\bf if} {the output $(A, K, L)$ does not satisfy the above conditions \ref{OS1R}-\ref{OS2R}} {\bf then} Reject
    \STATE Define $\Psi: \Gamma(K, L) \to G$ where for every $j\in \prod_{i\in [t]}[k_i]$ 
    \[
    \Psi(x_1^{j_1} \cdots x_t^{j_t}) = a_1^{j_1}*(a_2^{j_2}*(\cdots (a_{t-1}^{j_{t-1}}*a_t^{j_t}))\cdots).
    \] 
    \STATE\label{TKOLCK01} Choose  $m=O(\sqrt{|G|})$ elements $x^{(1)},x^{(2)},\ldots,x^{(m)} \in \Gamma(K,L)$ uniformly at random.
    \STATE\label{TKOLCK02} \textbf{If} $(\exists x^{(k)}\not=x^{(j)})\ \Psi(x^{(k)})=\Psi(x^{(j)})$ \textbf{then} Reject.
    \STATE\label{Tfacirestar011} {\bf Repeat} the following steps \( O(1/\epsilon) \) times:
        \STATE\ \ \  Choose $(x, y) \in \Gamma(K,L)^2$ uniformly at random.        \STATE\label{Tfacirestar021}\ \ \  \textbf{if} \[\Psi(xy)\neq \Psi(x) * \Psi(y)   \]\STATE \textbf{\hspace{.09in} then} Reject
    \STATE\label{acceptT} Accept
\end{algorithmic}
\end{algorithm}

Consider the tester in Algorithm~\ref{AlgTest}. The tester runs the randomized Algorithm {\bf Random-Generators}\footnote{In the PS-model, the size $|G|$ is not given as input; however, the procedure {\bf Random-Generators} returns $|G|$ as part of its output (Lemma~\ref{AlgGabelian21}).} from Lemma~\ref{AlgGabelian21}. By Lemma~\ref{AlgGabelian21}, if $G$ is an abelian group, then, with high probability, it provides:
\begin{enumerate}
        \item A set $A=\{a_1,\ldots,a_t\}$ of generators for $G$ of size at most $\log |G|$.
        \item Integers $K=(k_i)$, where $i\in [t]$ such that $k_i\ge 2$, $\prod_{i\in[t]}k_i=|G|$, along with integers $L=(\lambda_{i,j})$, where $i\in[t],j\in [i-1]$ and $0\le \lambda_{i,j}\le k_i-1$. 
\end{enumerate}
Additionally, $G$ is isomorphic to $\Gamma(K,L)$ via the isomorphism
    $$\Psi(x_1^{j_1}\cdots x_t^{j_t})=a_1^{j_1}*a_2^{j_2}*\cdots *a_t^{j_t}$$ for every $0\le j_i\le k_i-1$ and $i\in [t]$.

When $G$ is not an abelian group, we cannot guarantee any of the conditions above. If the algorithm {\bf Random-Generators} runs more than $\tilde O(\sqrt{|G|})$ time, or if $K=(k_i)$ and $L=(\lambda_{i,j})$ does not satisfy item~\ref{OS2R} or if $|A|>\log |G|$, then $(G,*)$ is not an abelian group, and the tester rejects. See steps~\ref{LargeGR} and~\ref{AKLR}.

Now, since we require the time complexity to be at most $\tilde O(\sqrt{|G|})$, the tester cannot compute $\Psi(z)$ for all $z\in \Gamma(K,L)$, and build a table $\Psi:\Gamma(K,L)\to G$, and verify whether $\Psi$ is a bijective function.

Instead, the tester performs the following two tests: 
\begin{enumerate}
    \item The first test verifies whether $\Psi$ is $\Theta(1)$-close to being a bijective map (see steps~\ref{TKOLCK01}-\ref{TKOLCK02}).
    \item The second test verifies whether $\Psi$ is $\Theta(\epsilon)$-close to being an isomorphism (see step~\ref{Tfacirestar02}). 
\end{enumerate}

The analysis in the following subsection shows that those two tests are sufficient. 

\subsection{Proof of Correctness and Complexity}
We now proceed to prove the correctness of the tester.

\vspace{.2in}

\noindent
{\bf Completeness:}
If $(G,*)$ is an abelian group, then, with high probability, {\bf Random-Generators} procedure returns a valid set of generators $A$, along with $K$ and $L$, that satisfy items~\ref{OS1R} and~\ref{OS2R}. Thus, the tester does not reject in steps~\ref{LargeGR} and~\ref{AKLR}. Consequently, $\Psi$ is bijective, and $G$ is isomorphic to $\Gamma(K,L)$ via $\Psi:\Gamma(K,L)\to G$. Therefore, the tester does not reject in step~\ref{TKOLCK02} and~\ref{Tfacirestar02}, and with high probability, it accepts. 

\vspace{.2in}

\noindent
{\bf Soundness:} Suppose $(G,*)$ is $\epsilon$-far from any abelian group.

If the tester does not reject in steps~\ref{LargeGR} or~\ref{AKLR}, then, by Lemma~\ref{GamisAb}, $\Gamma(K,L)$ is a group of size $\prod_{i\in[t]}k_i=|G|$. Thus
$$|\Gamma(K,L)|=|G|.$$

We now prove the following.
\begin{lemma}\label{OROR}
    Let $x_1,x_2$ be distinct elements in $\Gamma(K,L)$. If $\Psi(x_1)=\Psi(x_2)$, then for any $y\in \Gamma(K,L)$, at least one of the following holds:
    \begin{enumerate}
        \item\label{OROR1} $\Psi(x_1y)\not=\Psi(x_1)*\Psi(y)$.
        \item\label{OROR2} $\Psi(x_2y)\not=\Psi(x_2)*\Psi(y)$.
        \item\label{OROR3} $\Psi(x_1y)=\Psi(x_2y)$.
    \end{enumerate}
\end{lemma}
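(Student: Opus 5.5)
The argument is a short contrapositive: I will assume that items~\ref{OROR1} and~\ref{OROR2} both fail and deduce item~\ref{OROR3}. It uses nothing about group structure in $(G,*)$; it only uses that $*$ is a well-defined function $G^2\to G$.

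So fix distinct $x_1,x_2\in\Gamma(K,L)$ with $\Psi(x_1)=\Psi(x_2)$, and fix any $y\in\Gamma(K,L)$. Suppose neither item~\ref{OROR1} nor item~\ref{OROR2} holds, so that
$$\Psi(x_1y)=\Psi(x_1)*\Psi(y)\quad\text{and}\quad \Psi(x_2y)=\Psi(x_2)*\Psi(y).$$
The right-hand sides are the binary operation applied to the pairs $(\Psi(x_1),\Psi(y))$ and $(\Psi(x_2),\Psi(y))$. These pairs are equal, since $\Psi(x_1)=\Psi(x_2)$. Because $*$ is a function, the two right-hand sides are therefore the same element of $G$. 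Chaining the equalities gives
$$\Psi(x_1y)=\Psi(x_1)*\Psi(y)=\Psi(x_2)*\Psi(y)=\Psi(x_2y),$$
which is exactly item~\ref{OROR3}.

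There is no real obstacle here. The only care needed is to note that $xy$ denotes multiplication in $\Gamma(K,L)$, which is well defined by Lemma~\ref{GamisAb}, so $\Psi(x_1y)$ and $\Psi(x_2y)$ make sense. The distinctness of $x_1,x_2$ is not used in the deduction itself. It matters only for how the lemma is applied afterwards: since $x_1\neq x_2$ in the group $\Gamma(K,L)$, we have $x_1y\neq x_2y$, so item~\ref{OROR3} yields a new colliding pair of distinct elements. That is what will let collisions be propagated across cosets in the subsequent analysis.
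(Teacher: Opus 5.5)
Your proof is correct and is essentially the paper's argument: the paper assumes item 3 fails and uses $\Psi(x_1)*\Psi(y)=\Psi(x_2)*\Psi(y)$ to force item 1 or item 2, which is the same chain of equalities you wrote, arranged as a contrapositive in the other direction. Your remark that the distinctness of $x_1,x_2$ is used only later, to get $x_1y\neq x_2y$ in the Case I analysis, is also accurate.
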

\begin{proof}
    If $\Psi(x_1y)\not=\Psi(x_2y)$ then since $\Psi(x_1)*\Psi(y)=\Psi(x_2)*\Psi(y)$, we must have either $\Psi(x_1y)\not=\Psi(x_1)*\Psi(y)$ or $\Psi(x_2y)\not=\Psi(x_2)*\Psi(y)$. 
\end{proof}
Next, we partition $\Gamma:=\Gamma(K,L)$ as $\Gamma=\Gamma_1\cup \Gamma_*$ where
$$\Gamma_1=\{x\in \Gamma|(\forall y\not=x)\ \Psi(x)\not= \Psi(y)\},\ \ \ \Gamma_*=\Gamma\backslash \Gamma_1=\{x\in\Gamma|(\exists y\not=x)\ \Psi(x)=\Psi(y)\}.$$

By the birthday paradox (see also~\cite{Amir0R24} subsection 1.2.3), we obtain the following result.
\begin{lemma}\label{KscK}
    If $|\Gamma_*|\ge |\Gamma|/8$, then for $m=O(\sqrt{|G|})$ elements  $x^{(1)},\ldots,x^{(m)}\in\Gamma$ chosen uniformly at random, with probability at least $5/6$, there is a pair $x^{(k)}\not =x^{(j)}$ such that $\Psi(x^{(k)})=\Psi(x^{(j)})$.
\end{lemma}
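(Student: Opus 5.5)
We bound the probability that no colliding pair appears among the $m$ samples by a birthday-type counting argument over the fibres of $\Psi$. Partition $\Gamma_*$ into the nonempty preimage classes $C_1,\ldots,C_s$ of $\Psi$ restricted to $\Gamma_*$. Each class has size at least $2$, and since $|\Gamma|=|G|$ we have $\sum_i |C_i|=|\Gamma_*|\ge |\Gamma|/8$. A collision in the sense of the lemma happens exactly when two samples fall into the same class $C_i$ and are distinct elements.

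To avoid the dependence that arises when sampled elements happen to coincide, we first refine each class into pairs. From each $C_i$ we pick $\lfloor |C_i|/2\rfloor$ disjoint pairs $\{u,v\}$ with $u\ne v$ and $\Psi(u)=\Psi(v)$. Since $\lfloor c/2\rfloor\ge c/3$ for $c\ge 2$, these pairs cover a set $P$ of at least $\tfrac{2}{3}|\Gamma_*|\ge |\Gamma|/12$ elements, and there are $p=|P|/2\ge |\Gamma|/24$ pairs. It now suffices to show that, with probability at least $5/6$, some pair $\{u,v\}$ has both $u$ and $v$ among the samples; such a pair already gives $x^{(k)}\ne x^{(j)}$ with equal $\Psi$-values.

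We split the $m$ samples into two halves of size $m/2$ each. In the first half, each sample lands in $P$ with probability at least $1/12$, so a Chernoff bound gives, with probability at least $1-1/12$ and for $m=c\sqrt{|G|}$ with $c$ large, at least $m/48$ first-half samples in $P$. These hit at least $r$ distinct pairs, where $r\ge$ the number of distinct elements of $P$ hit, divided by $2$. The number of distinct elements is itself at least $m/96$ with high probability, because repeats among $O(\sqrt{|G|})$ samples are few in expectation ($O(c^2)$ of them). If $|G|$ is small, we have $m\ge |G|$ and every element is seen anyway.

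Now fix these $r\ge m/200$ distinct touched pairs, and in each one mark the partner of the element that was hit. Each second-half sample independently equals one of the $r$ marked partners with probability $r/|\Gamma|$. So the probability that no second-half sample does so is $(1-r/|G|)^{m/2}\le e^{-rm/(2|G|)}\le e^{-c^2/400}$. Choosing $c$ large enough makes this at most $1/12$, and a union bound gives total failure probability at most $1/6$.

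The main obstacle is the bookkeeping that turns "many samples in $P$" into "many distinct pairs touched". Repeated samples, or two samples hitting the same pair, could in principle reduce $r$; this is why we use a crude expectation/Markov bound on the number of repeats. Splitting the sample into halves is what makes the second stage cleanly independent of the first. Alternatively, one could simply cite the standard birthday bound for distributions whose collision probability is at least $\Omega(1/|G|)$ restricted to distinct elements, as in the cited reference of Amir et al.
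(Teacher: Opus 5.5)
Your route differs from the paper's, which gives no argument beyond invoking the birthday paradox (with a pointer to Amir et al.). Making that explicit, by pairing elements inside the fibres of $\Psi$ and sampling in two phases, is a sensible self-contained alternative. Your pairing step, the Chernoff step and the second phase are fine: conditioned on the first half, each second-half sample independently hits one of the $r$ marked partners with probability $r/|G|$.

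The gap is the claim that the first half touches at least $m/96$ distinct elements of $P$. You justify it by subtracting the $O(c^2)$ expected repeats, but this only works when $c^2$ is negligible compared with $m/96=c\sqrt{|G|}/96$.
\begin{itemize}
\item Markov bounds the probability that the repeats exceed $m/96$ by $12c/\sqrt{|G|}$. This is small only for $|G|\ge Kc^2$ with $K$ around $144^2$.
\item Near $|G|\approx c^2$, the expected number of repeats ($\approx c^2/8$) is itself larger than $m/96$, so subtracting repeats cannot work there at all.
\item Your fallback covers only $|G|\le c^2$, which is exactly when $m\ge|G|$.
\item Even there, ``every element is seen anyway'' is false for uniform sampling with replacement. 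Seeing all of $\Gamma$ needs roughly $|G|\ln|G|$ samples and happens only with some probability. Once $m>96|G|$, the claimed $m/96$ distinct elements of $P$ cannot even exist.
\end{itemize}
Both thresholds scale with $c^2$, and $c$ must be large to get $e^{-c^2/400}\le 1/12$. So enlarging $c$ never closes the window $c^2<|G|<Kc^2$, and as written the proof does not cover all group sizes.

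The fix is to decouple the two phases. Take $m_1=c_1\sqrt{|G|}$ first-phase samples with a fixed constant $c_1$, and $m_2=c_2\sqrt{|G|}$ second-phase samples.
\begin{itemize}
\item Your argument then gives $\Omega(m_1)$ distinct elements of $P$ whenever $|G|\ge Kc_1^2$. This threshold no longer depends on $c_2$.
\item The second phase then fails with probability at most $e^{-\Omega(c_1c_2)}$.
\item For the finitely many sizes $|G|<Kc_1^2$, fix one designated pair $\{u,v\}$; one exists because $\Gamma_*\neq\emptyset$. The probability that $u$ or $v$ is missed by the $m_2$ samples is at most $2e^{-m_2/|G|}\le 2e^{-c_2/(c_1\sqrt{K})}$, which is small once $c_2$ is large.
\end{itemize}
Finally, you have three failure events (Chernoff, repeats, second phase). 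Budgets of $1/12$ for two of them plus an unspecified third do not sum to $1/6$, so allot, e.g., $1/18$ to each.
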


Thus, if $|\Gamma_*|\ge |\Gamma|/8$, then $(G,*)$ is not an abelian group, and with high probability, the algorithm rejects in steps~\ref{TKOLCK01}-\ref{TKOLCK02}. Therefore, we may assume that
\begin{eqnarray}\label{KSK8}
    |\Gamma_*|< |\Gamma|/8.
\end{eqnarray}

We now distinguish between two cases

\noindent
{\bf Case I.} $|\Gamma_1|\le (1-\epsilon/8)|\Gamma|$.

Then $|\Gamma_*|=|\Gamma|-|\Gamma_1|\ge (\epsilon/8)|\Gamma|$. We aim to partition \( \Gamma_* \) into pairs \(\{x, y\}\) such that \(x \neq y\), \(\Psi(x) = \Psi(y)\), and all the pairs are disjoint. While not every element in \( \Gamma_* \) will be included in a pair, we strive to create as many pairs as possible.

To achieve this, consider the relation $x\sim y$ if \(\Psi(x) = \Psi(y)\), and partition $\Gamma_*=\Gamma_*^{(1)}\cup \Gamma_*^{(2)}\cup\cdots \cup \Gamma_*^{(w)}$ according to this relation (i.e., the equivalence classes). By definition of $\Gamma_*$, each $\Gamma_*^{(i)}$ contains at least two elements with the same image under $\Psi$. From each $\Gamma_*^{(i)}$, we can create $\lfloor |\Gamma_*^{(i)}|/2\rfloor$ disjoint pairs. 
Therefore, we can create at least
$$\sum_{i=1}^w\left\lfloor\frac{|\Gamma_*^{(i)}|}{2}\right\rfloor \ge \sum_{i=1}^w\frac{|\Gamma_*^{(i)}|}{3}\ge \frac{|\Gamma_*|}{3}\ge \frac{\epsilon}{24}|\Gamma|$$
disjoint pairs. 

Let $\{x_1,x_2\}$ be one such pair, where $x_1\not=x_2$ and $\Psi(x_1)=\Psi(x_2)$. By Lemma~\ref{OROR}, for every $y\in \Gamma$, one of the items~\ref{OROR1}-\ref{OROR3} in Lemma~\ref{OROR} occurs. 
If item~\ref{OROR3} occurs for more than $|\Gamma|/2$ of the $y$ values, then $\Psi(x_1y)=\Psi(x_2y)$ for at least $|\Gamma|/2$ ordered pairs $(x_1y,x_2y)\in \Gamma\times \Gamma$. Since $\Gamma$ is a group, it follows that $x_1y\not=x_2y$ and $x_1y\not=x_1y'$ for $y\not=y'$. Consequently, at least $|\Gamma|/2$ values of $y$ satisfy $x_1y\in \Gamma_*$, implying that $|\Gamma_*|\ge |\Gamma|/2$, which contradicts (\ref{KSK8}).

Thus, for at least $|\Gamma|/2$ of the $y$ values, either item~\ref{OROR1} in Lemma~\ref{OROR} occurs (where  $\Psi(x_1y)\not=\Psi(x_1)*\Psi(y)$) or item~\ref{OROR2} occurs (where $\Psi(x_2y)\not=\Psi(x_2)*\Psi(y)$).

Since this holds for each pair $\{x,y\}$, and we have at least $(\epsilon/24)|\Gamma|$ such disjoint pairs, it follows that
$$\Pr_{x,y}[\Psi(xy)\not= \Psi(x)* \Psi(y)]\ge \frac{(\epsilon/48)|\Gamma|\cdot|\Gamma|/2}{|\Gamma|^2}\ge \frac{\epsilon}{96}.$$
Therefore, with high probability, the tester rejects in steps~\ref{Tfacirestar011}-\ref{Tfacirestar021}.

\noindent
{\bf Case II.} $|\Gamma_1|> (1-\epsilon/8)|\Gamma|$.

Let $\A:\Gamma\to G$ be any bijective function such that $\A(x)=\Psi(x)$ if $x\in \Gamma_1$. Define the binary operation $\bullet$ on $G$ as follows: 
$$a\bullet b=\A(\A^{-1}(a)\A^{-1}(b)).$$

By Lemma~\ref{TrivR}, we have the following:
\begin{lemma}\label{Gbullet}
    $(G,\bullet)$ is an abelian group.
\end{lemma}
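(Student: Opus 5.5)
The plan is to apply Lemma~\ref{TrivR} directly, with $\Gamma=\Gamma(K,L)$ as the group, $G$ as the target set, and $\A$ as the bijection. Two facts must be checked first. One is that such an $\A$ exists. The other is that $\Gamma$ is an abelian group of the right size. Lemma~\ref{TrivR} then shows that $(G,\bullet)$ is a group isomorphic to $\Gamma$, and the isomorphism carries commutativity over.

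The order of the steps is as follows.

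\textbf{Step 1: $\Gamma$ is an abelian group and $|\Gamma|=|G|$.} We are in the branch where the tester did not reject in steps~\ref{LargeGR} and~\ref{AKLR}, so $K$ and $L$ satisfy condition~\ref{OS2R}. Lemma~\ref{GamisAb} then gives that $\Gamma(K,L)$ is an abelian group with $|\Gamma(K,L)|=\prod_{i\in[t]}k_i=|G|$.

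\textbf{Step 2: a bijection $\A$ extending $\Psi|_{\Gamma_1}$ exists.} By the definition of $\Gamma_1$, the restriction $\Psi|_{\Gamma_1}$ is injective. So $\Psi(\Gamma_1)\subseteq G$ has exactly $|\Gamma_1|$ elements. The sets $\Gamma\setminus\Gamma_1$ and $G\setminus\Psi(\Gamma_1)$ therefore have the same cardinality, $|G|-|\Gamma_1|$. Take $\A$ to equal $\Psi$ on $\Gamma_1$ and to be any bijection $\Gamma\setminus\Gamma_1\to G\setminus\Psi(\Gamma_1)$ on the rest. The resulting $\A:\Gamma\to G$ is a bijection, so $\A^{-1}$ and the operation $\bullet$ are well defined.

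\textbf{Step 3: apply Lemma~\ref{TrivR}.} Set $\Pi=\A$ and $A=G$. The lemma says $(G,\bullet)$, with $a\bullet b=\A(\A^{-1}(a)\A^{-1}(b))$, is a group isomorphic to $\Gamma$. For commutativity, take $a,b\in G$. Since $\Gamma$ is abelian,
\[
a\bullet b=\A(\A^{-1}(a)\A^{-1}(b))=\A(\A^{-1}(b)\A^{-1}(a))=b\bullet a .
\]

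No step is a real obstacle; the only point needing care is the well-definedness in Step 2. This is where injectivity of $\Psi$ on $\Gamma_1$ and the equality $|\Gamma|=|G|$ from Step 1 are used.
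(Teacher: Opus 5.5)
Your proof is correct and takes the paper's route: the paper obtains the lemma directly from Lemma~\ref{TrivR} applied to the bijection $\A:\Gamma\to G$. Your other checks are details the paper leaves implicit. These are that $\Gamma$ is abelian with $|\Gamma|=|G|$, that a bijection extending $\Psi|_{\Gamma_1}$ exists, and that commutativity transfers through $\A$.
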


We now prove the following result.
\begin{lemma}\label{KKKK}
    Let $Y$ be the event $[x\in \Gamma_1, y\in \Gamma_1, xy\in \Gamma_1]$. Then
    $$\Pr_{x,y}[Y]\ge 1-\frac{3}{8}\epsilon.$$
\end{lemma}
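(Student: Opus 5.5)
We bound the probability of the complement of $Y$ by a union bound over its three failure modes: $x\notin\Gamma_1$, $y\notin\Gamma_1$, and $xy\notin\Gamma_1$. Throughout, $x$ and $y$ are drawn independently and uniformly from $\Gamma=\Gamma(K,L)$. We work under the assumption of Case~II, namely $|\Gamma_1|>(1-\epsilon/8)|\Gamma|$, or equivalently $|\Gamma_*|<(\epsilon/8)|\Gamma|$.

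The first two events are immediate. Since $x$ is uniform on $\Gamma$,
\[
\Pr_x[x\notin\Gamma_1]=\frac{|\Gamma_*|}{|\Gamma|}<\frac{\epsilon}{8},
\]
and the same bound holds for $y$.

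For the third event, we use that $\Gamma(K,L)$ is a group by Lemma~\ref{GamisAb}. Fix any $x\in\Gamma$. The map $y\mapsto xy$ is then a bijection of $\Gamma$ onto itself, so when $y$ is uniform, $xy$ is also uniform on $\Gamma$. Hence, for every fixed $x$,
\[
\Pr_y[xy\notin\Gamma_1]=\frac{|\Gamma_*|}{|\Gamma|}<\frac{\epsilon}{8}.
\]
Averaging over $x$ gives the same bound for the joint probability.

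Adding the three bounds,
\[
\Pr_{x,y}[\neg Y]\le \Pr[x\notin\Gamma_1]+\Pr[y\notin\Gamma_1]+\Pr[xy\notin\Gamma_1]<\frac{3}{8}\epsilon,
\]
which gives $\Pr_{x,y}[Y]\ge 1-\frac{3}{8}\epsilon$.

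No step is a real obstacle. The only point needing justification is that $xy$ is uniformly distributed, and this rests solely on $\Gamma(K,L)$ being a group. It does not depend on any property of $(G,*)$ or of $\Psi$.
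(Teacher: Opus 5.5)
Your proposal is correct and follows essentially the same route as the paper. Both use a union bound over the three failure events. For the third event, the paper appeals to the fact that every element appears exactly $|\Gamma|$ times in the Cayley table of $\Gamma$, which is the same group (Latin-square) property you derive from the bijection $y\mapsto xy$.
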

\begin{proof}
    We have 
    $$ \Pr_x[x\not\in \Gamma_1]=\Pr_y[y\not\in \Gamma_1]=\frac{|\Gamma_*|}{|\Gamma|}=\frac{|\Gamma|-|\Gamma_1|}{|\Gamma|}\le \frac{\epsilon}{8}.$$
    Since every group element appears exactly $|\Gamma|$ times in the Cayley table of $\Gamma$, it follows that 
    $$\Pr_{x,y}[xy\not\in \Gamma_1]=\frac{|\Gamma_*|\cdot|\Gamma|}{|\Gamma|^2}\le \frac{\epsilon}{8},$$
    and the result follows from the union bound.
     \end{proof}
  If $Y$ occurs, then $\Psi(xy)=\A(xy)$, and
$$\Psi(xy)=\A(xy)=\A(\A^{-1}(\A(x))\A^{-1}(\A(y)))=\A(x)\bullet\A(y).$$  
 Since $\A$ is bijective, $(G,\bullet)$ is abelian group, $(G,*)$ is $\epsilon$-far from any abelian group, and by Lemma~\ref{KKKK},
\begin{eqnarray}
    \Pr_{x,y}[\Psi(xy)\not=\Psi(x)*\Psi(y)]
    &\ge & \Pr_{x,y}[\Psi(xy)\not=\Psi(x)*\Psi(y)|Y]\Pr[Y]\nonumber\\
    &=&\Pr_{x,y}[\A(x)\bullet \A(y)\not=\A(x)*\A(y)|Y]\Pr[Y]\nonumber\\
    &\ge&(\Pr_{x,y}[\A(x)\bullet \A(y)\not=\A(x)*\A(y)]-\Pr[\neg Y])\Pr[Y]\nonumber\\
    &=&(\Pr_{a,b\sim G}[a\bullet b\not=a*b]-\Pr[\neg Y])\Pr[Y]\nonumber\\
    &\ge&(\epsilon-3\epsilon/8)(1-3\epsilon/8)\ge \frac{\epsilon}{3}.\nonumber
\end{eqnarray}
Thus, with high probability, the tester rejects in steps~\ref{Tfacirestar011}-\ref{Tfacirestar021}.
This completes the proof of correctness for the tester.

\vspace{.2in}
\noindent
{\bf Time Complexity:}
The time complexity of the tester is as follows. By Lemma~\ref{AlgGabelian21}, time complexity of {\bf Random-Generators} is $\tilde O(\sqrt{|G|})$. 
Computing $\Psi(\alpha)$ requires time at most $$\sum_{i=1}^t\lceil\log k_i\rceil\le 2t+\log (k_1k_2\cdots k_t)\le 3\log |G|.$$
Therefore, the time complexity of steps~\ref{TKOLCK01}-\ref{TKOLCK02} is $\tilde O(\sqrt{|G|})$. By Lemma~\ref{GamisAb}, operations in $\Gamma$ take time $\tilde O(\log^2 |G|)$, so the time complexity of steps~\ref{Tfacirestar011}-\ref{Tfacirestar021} is $\tilde O(1/\epsilon)$. Thus, the overall time complexity of the tester is $\tilde O(\sqrt{|G|}+1/\epsilon)$.

\section{Testing Subclasses of Abelian Groups}\label{TSAG}
In this section, we prove.
\begin{theorem}
    Let ${\cal G}$ be a subclass of abelian groups that is closed under isomorphism. Suppose there exists an algorithm ${\cal A}$ that can decide if any abelian group of the form $G'=\ZZ_{m_1}\times \cdots \times\ZZ_{m_r}$ belongs to ${\cal G}$ in time $T$. Then, there exists a tester for ${\cal G}$ that runs in time $T+\tilde O(\sqrt{|G|}+1/\epsilon)$ and makes $\tilde O(\sqrt{|G|}+1/\epsilon)$ queries.
\end{theorem}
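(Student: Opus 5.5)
We modify Algorithm~\ref{AlgTest} the same way Theorem~\ref{kkdd2} was obtained from Algorithm~\ref{AGOST}. After step~\ref{AKLR}, once $(A,K,L)$ passes the syntactic checks, we run the deterministic algorithm of Lemma~\ref{ConBasis} on $\Gamma(K,L)$. It returns $m_1|m_2|\cdots|m_r$ with $\Gamma(K,L)\cong G':=\ZZ_{m_1}\times\cdots\times\ZZ_{m_r}$. We then run ${\cal A}$ on $G'$ and reject if $G'\notin{\cal G}$; otherwise we continue with steps~\ref{TKOLCK01}--\ref{acceptT} unchanged.

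\textbf{Complexity.} Since $t\le\log|G|$ and $n=|G|$, Lemma~\ref{ConBasis} costs poly$(\log|G|)$. The call to ${\cal A}$ costs $T$ and makes no queries to $*$. The remaining steps cost $\tilde O(\sqrt{|G|}+1/\epsilon)$ time and queries, as in the analysis of Theorem~\ref{TestAbelianGroup}.

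\textbf{Completeness.} Suppose $(G,*)\in{\cal G}$. With high probability \textbf{Random-Generators} succeeds, and by Lemma~\ref{AlgGabelian21} $\Psi$ is an isomorphism $\Gamma(K,L)\to G$. Then $G'\cong\Gamma(K,L)\cong G$, and closure of ${\cal G}$ under isomorphism gives $G'\in{\cal G}$, so the new check passes. The other steps pass exactly as in the completeness proof of Theorem~\ref{TestAbelianGroup}.

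\textbf{Soundness.} Suppose $(G,*)$ is $\epsilon$-far from every group in ${\cal G}$ over $G$. If the tester reaches step~\ref{TKOLCK01}, then $G'\in{\cal G}$, so $\Gamma(K,L)\in{\cal G}$. We rerun the soundness argument of Theorem~\ref{TestAbelianGroup} verbatim. The only place where farness was used is Case~II, against the group $(G,\bullet)$ defined from a bijection $\A:\Gamma\to G$. By Lemma~\ref{TrivR}, $(G,\bullet)\cong\Gamma(K,L)\cong G'$, hence $(G,\bullet)\in{\cal G}$, so $\Pr_{a,b}[a\bullet b\ne a*b]\ge\epsilon$ still holds. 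The birthday step (Lemma~\ref{KscK}) does not use farness at all. Case~I also does not use farness: Lemma~\ref{OROR} and the pairing argument only use that $\Gamma$ is a group together with (\ref{KSK8}). So the same rejection probabilities ($\ge\epsilon/96$ per trial, or $5/6$ from the birthday test) hold.

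The main point to get right is that every comparison group in the soundness proof lies in ${\cal G}$, not merely that it is abelian. This holds because the only comparison group is $(G,\bullet)$, which is isomorphic to the already-verified $\Gamma(K,L)$. One should also note that Case~I needs no farness, so nothing else changes.

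For the listed classes, membership of $\ZZ_{m_1}\times\cdots\times\ZZ_{m_r}$ with $m_1|\cdots|m_r$ can be decided in poly$(\log|G|)$ time, so $T$ is negligible:
\begin{itemize}
\item rank at most $k$: check $r\le k$;
\item $p$-group: check that $m_r$ is a power of a single prime, via a perfect-power test plus a primality test;
\item vector space over $\ZZ_p$: check that all $m_i$ equal the same prime;
\item cyclic: check $r=1$.
\end{itemize}
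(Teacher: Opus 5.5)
Your proposal is correct and matches the paper's proof essentially step for step. You insert the basis computation of Lemma~\ref{ConBasis} and the membership check by ${\cal A}$ before step~\ref{TKOLCK01}, and you observe that farness is used only in Case~II, against $(G,\bullet)$, which is isomorphic to the already verified $\Gamma(K,L)\cong G'\in{\cal G}$; you are just more explicit than the paper about using closure under isomorphism in completeness and about ${\cal A}$ making no queries.
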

In particular, we have
\begin{theorem}
    There exist testers for abelian groups of rank at most $k$, abelian $p$-groups, vector spaces, and cyclic groups  over~$\ZZ_p$ that run in time $ O(\sqrt{|G|}+1/\epsilon)$.
\end{theorem}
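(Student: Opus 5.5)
The statement to prove is the last theorem displayed: testers for abelian groups of rank at most $k$, abelian $p$-groups, vector spaces over $\ZZ_p$, and cyclic groups. I would derive it as a corollary of the preceding theorem, which gives a tester running in time $T+\tilde O(\sqrt{|G|}+1/\epsilon)$ for any isomorphism-closed subclass ${\cal G}$. So the work is to exhibit, for each class, a decision procedure ${\cal A}$ whose time $T$ is negligible, namely $T=\mathrm{poly}(\log|G|)$. The input to ${\cal A}$ is the invariant-factor decomposition $m_1|m_2|\cdots|m_r$ produced by Lemma~\ref{ConBasis} from $\Gamma(K,L)$, with $\prod m_i=|G|$ and $r\le\log|G|$.

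For the preceding theorem itself, I would use Algorithm~\ref{AlgTest} with one additional step. After step~\ref{AKLR}, run Lemma~\ref{ConBasis} on $\Gamma(K,L)$ in time $\mathrm{poly}(t,\log|G|)$, then run ${\cal A}$ on $\ZZ_{m_1}\times\cdots\times\ZZ_{m_r}$, and reject if the answer is negative.

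\textbf{Completeness.} If $(G,*)\in{\cal G}$, then Lemma~\ref{AlgGabelian21} gives $G\cong\Gamma(K,L)\cong G'$ with high probability. Closure under isomorphism gives $G'\in{\cal G}$, so the extra step passes, and the rest follows as before.

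\textbf{Soundness.} Suppose $(G,*)$ is $\epsilon$-far from ${\cal G}$, and the extra step passes, so $\Gamma(K,L)\in{\cal G}$. Repeat the Case I/Case II analysis of Section~\ref{ATAG}. The birthday step and Case I never used distance from a class; they only derive a rejection probability. In Case II the comparison group $(G,\bullet)$ is isomorphic to $\Gamma(K,L)$ via $\A$, by Lemma~\ref{TrivR}, and hence lies in ${\cal G}$. Since $(G,*)$ is $\epsilon$-far from ${\cal G}$, we get $\Pr_{a,b}[a\bullet b\neq a*b]\ge\epsilon$, and the same computation yields rejection probability at least $\epsilon/3$ per trial. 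The query count is unchanged; the running time gains $T$ plus the polylogarithmic cost of Lemma~\ref{ConBasis}.

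It remains to give polylogarithmic-time deciders on the invariant factors.
\begin{itemize}
\item \textbf{Rank at most $k$.} The rank equals the number $r$ of invariant factors that are at least $2$, so check $r\le k$.
\item \textbf{Cyclic groups.} This is the case $r\le 1$.
\item \textbf{Abelian $p$-groups.} Check that $|G|=\prod m_i$ is a power of $p$ by repeated division, using $O(\log|G|)$ divisions. If $p$ is not fixed, instead test whether $m_r$ is a prime power. Since every $m_i$ divides $m_r$, this suffices. It can be done by trying each exponent $e\le\log m_r$, taking the integer $e$-th root, and testing it for primality with AKS, all in polylogarithmic time.
\item \textbf{Vector spaces over $\ZZ_p$.} Check that every $m_i=p$, or, for unspecified $p$, that $m_1=\cdots=m_r$ and this common value is prime.
\end{itemize}
Each check takes $\mathrm{poly}(\log|G|)$ time, which is absorbed into $\tilde O(\sqrt{|G|}+1/\epsilon)$.

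The main obstacle is the claim of time $O(\sqrt{|G|}+1/\epsilon)$ without the tilde. The underlying tester is only $\tilde O$, so I would read the statement up to polylogarithmic factors, consistent with the abstract. Beyond that, the only subtlety is making sure Case II's comparison group belongs to ${\cal G}$, which closure under isomorphism handles directly.
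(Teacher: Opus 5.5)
Your proposal is correct and is essentially the paper's proof. You append the Lemma~\ref{ConBasis} decomposition and the membership check to Algorithm~\ref{AlgTest}, note that the birthday step and Case~I never use the distance, and in Case~II use $(G,\bullet)\cong\Gamma(K,L)\in{\cal G}$ by Lemma~\ref{TrivR} and closure under isomorphism (the paper's Lemma~\ref{FarFrom}). You also go further than the paper: you exhibit the $\mathrm{poly}(\log|G|)$ deciders on $m_1|\cdots|m_r$, which the paper only asserts, and you correctly note that the bound is really $\tilde O(\sqrt{|G|}+1/\epsilon)$.

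One caveat applies equally to the paper. Case~II needs a bijection $\A:\Gamma(K,L)\to G$, i.e.\ $|\Gamma(K,L)|=|G|$, and this is a genuine check only when $|G|$ is known. So the result should be read in the FS-model. In the PS-model, take an abelian group $H\in{\cal G}$ of order $n$, adjoin one extra element, and keep $H$ closed under $*$. This magma is $\Omega(1)$-far from every abelian group of order $n+1$, by stability of homomorphism testing: every homomorphism from a group of order $n$ to one of order $n+1$ is trivial, while the inclusion is injective. Yet it is indistinguishable from $H$ itself until the extra element is sampled, which takes $\Omega(n)$ samples.
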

\begin{proof}
    The tester follows the same structure as the tester for abelian groups in Algorithm~\ref{AlgTest}, with an additional two steps before step~\ref{TKOLCK01}. These additional steps are: 
    \begin{enumerate}
        \item\label{VS01} Finding a basis for $\Gamma:=\Gamma(K,L)$ to determine the abelian group $G'=\mathbb{Z}_{m_1}\times\cdots\times \mathbb{Z}_{m_r}$ that is isomorphic to $\Gamma$.
        \item\label{VS02} Verifying whether $G'\in {\cal G}$ using algorithm ${\cal A}$. If $G'\not\in {\cal G}$, then the tester rejects. 
    \end{enumerate}
    The basis can be found using the algorithm from Lemma~\ref{ConBasis}.
    
    If $\Gamma$ is isomorphic to $G'$ and $G'\in {\cal G}$, it follows that $\Gamma\in {\cal G}$. We now revisit the proof of Theorem~\ref{TestAbelianGroup} to show that if $\Gamma\in {\cal G}$, then the tester in Algorithm~\ref{AlgTest} is a tester for ${\cal G}$. 
    
\vspace{.2in}

    \noindent{\bf Completeness}: If $(G,*)$ is an abelian group in ${\cal G}$, then all the arguments from the completeness proof for the abelian group remain valid. Consequently, the algorithm does not reject in the steps~\ref{TKOLCK02} and~\ref{Tfacirestar02}. 
    
    Since $\Gamma\in {\cal G}$, the additional verification step~\ref{VS02} also does not reject, and the tester accepts.

\vspace{.2in}

    \noindent 
    {\bf Soundness}: Suppose $(G,*)$ is $\epsilon$-far from any abelian group in ${\cal G}$.

    First, Lemma~\ref{OROR} still applies. The sets $\Gamma_1$ and $\Gamma_*$ are defined in the same way. Consequently, Lemma~\ref{KscK} holds as well. If $|\Gamma_*|\ge |\Gamma|/8$, then $(G,*)$ is not an abelian group and, therefore, is not in~${\cal G}$. In that case, with high probability, the algorithm rejects in steps~\ref{TKOLCK01}-\ref{TKOLCK02}. Thus, we can assume 
$$|\Gamma_*|\le |\Gamma|/8.$$

Since Case I follows exactly as before, we proceed with Case II, where $|\Gamma_1|>(1-\epsilon/8)|\Gamma|$. 

Let $\A:\Gamma\to G$ be any bijective function such that $\A(x)=\Psi(x)$ for $x\in \Gamma_1$. Define a binary operation $\bullet$ on $G$ as follows: $$a\bullet b=\A(\A^{-1}(a)\A^{-1}(b)).$$

The following lemma replaces Lemma~\ref{Gbullet}. 
\begin{lemma}\label{FarFrom}
    $(G,\bullet)$ is an abelian group that is isomorphic to $\Gamma$.

    In particular, $(G,*)$ is $\epsilon$-far from $(G,\bullet)$.
\end{lemma}
\begin{proof}
The lemma follows from Lemma~\ref{TrivR}.

Since $\Gamma$ is in ${\cal G}$, we have that $(G,\bullet)$ is in ${\cal G}$. Since $(G,*)$ is $\epsilon$-far from any group in ${\cal G}$ we have that $(G,*)$ is $\epsilon$-far from $(G,\bullet)$.
\end{proof}

Now, Lemma~\ref{KKKK} and all the associated arguments also hold in this setting. 
The only difference is that we now derive 
$$\Pr_{a,b\sim G}[a\bullet b\not=a*b]\ge \epsilon$$
because, by Lemma~\ref{FarFrom}, $(G,*)$ is $\epsilon$-far from $(G,\bullet)$. 
\end{proof}

\section{Lower Bounds}\label{SecLB}

In this section, we present lower bounds for testing subclasses of abelian groups.

Bshouty~\cite{bshouty2025} proved the following two lower bounds.

\begin{lemma}\label{IsoKnown}
    Let ${\cal G}$ be the set of all groups that are isomorphic to either $H_1={\mathbb{Z}}_{2^2}^m$ or $H_2=\mathbb{Z}_{2^2}^{m-1}\times \mathbb{Z}_2^2$. Any algorithm that, with probability at least $2/3$, determines whether $G\in {\cal G}$ is isomorphic to $H_1$ or $H_2$ must access the elements of $G$ and its Cayley table at least $\Omega({|G|}^{1/4})$ times. 
\end{lemma}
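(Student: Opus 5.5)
Since the paper attributes this lemma to Bshouty~\cite{bshouty2025}, the plan is to reproduce its proof through Yao's minimax principle and a birthday-type collision argument. We fix two distributions. In the first, $G$ carries a uniformly random relabeling of $H_1=\ZZ_4^m$. In the second, $G$ carries a uniformly random relabeling of $H_2=\ZZ_4^{m-1}\times\ZZ_2^2$. Both have size $n=4^m$. It then suffices to show that any deterministic algorithm making $q=o(n^{1/4})$ accesses (element requests plus Cayley-table queries) sees transcripts whose distributions under the two models differ by $o(1)$ in total variation. In the FS-model the labels are a uniformly random bijection onto $G$, so each newly produced element looks like a fresh uniform label.

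First step: describe the transcript structurally. After $q$ queries, the algorithm's knowledge consists of a set of $O(q)$ group elements that it has obtained. These are the sampled elements together with products of previously obtained ones. The transcript is determined by which formal words in the sampled ``seed'' elements evaluate to equal group elements. So we model the algorithm as choosing seeds $g_1,\dots,g_s$ uniformly at random and building words $w_1,\dots,w_q$ adaptively. The algorithm only learns the equality pattern among $w_j(g_1,\dots,g_s)$. Any label it queries that has not appeared yet is a uniform new element, and it can be added as a new seed.

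Second step: compare the equality patterns. A coincidence $w_j=w_k$ means that some nonzero integer linear combination $\sum_i c_i g_i$ vanishes, with coefficients depending on the words. Reduce coefficients mod $4$, since both groups have exponent $4$. Consider first a combination with some $c_i$ odd. For uniform $g_i$ in either group, the probability that it vanishes is exactly $1/n$ in both groups, because multiplication by a unit is a bijection. Now consider a combination with all $c_i$ even, say $2\sum c_i' g_i$. It vanishes iff $\sum c_i' g_i$ lies in the $2$-torsion subgroup $H[2]$. In $H_1$ the subgroup $H[2]$ has size $2^m=\sqrt n$. In $H_2$ it has size $2^{m+1}$. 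So the vanishing probability is $1/\sqrt n$ versus $2/\sqrt n$.

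Third step: bound the distinguishing advantage. There are at most $O(q^2)$ candidate coincidences. Conditioning on the event that no nontrivial coincidence occurs, the transcripts are identically distributed in both models. The probability of any coincidence is at most $O(q^2)\cdot 2/\sqrt n=O(q^2/n^{1/2})$. For $q=o(n^{1/4})$ this is $o(1)$, which is below the advantage $1/3$ needed. Hence $q=\Omega(n^{1/4})$.

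The main obstacle is making the union bound rigorous under adaptivity. The words depend on earlier answers, and the linear combinations can be dependent across pairs. I would handle this by conditioning step by step: at each new query, the probability of a fresh coincidence is bounded by $2/\sqrt n$ given any collision-free history. This uses the fact that the seeds, conditioned on all previous words being distinct, remain nearly uniform; alternatively, one can generate the elements lazily, projecting onto the quotient $H/H[2]$ and the subgroup $H[2]$ separately. This step needs the most care.
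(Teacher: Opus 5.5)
This paper does not prove the lemma. It imports it from \cite{bshouty2025} and uses it as a black box in Theorem~\ref{lb03}, so there is no in-paper argument to compare yours with.

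Your route is the right argument and, once the two points below are repaired, a complete proof. It uses Yao's principle over random relabelings and reduces the transcript to the equality pattern of $\ZZ_4$-combinations of seeds. The key observation is also right: the only group-dependent coincidences are the even ones $2\sum_i c_i' g_i=0$, which occur with probability $|H[2]|/n$, i.e.\ $1/\sqrt n$ in $H_1$ versus $2/\sqrt n$ in $H_2$. The bound is tight as well. Doubling $O(n^{1/4})$ random elements and counting collisions inside $2H$ distinguishes the two groups, since $2H$ has size $\sqrt n$ in $H_1$ versus $\sqrt n/2$ in $H_2$.

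\textbf{Point (i).} It is not literally true that, conditioned on the global event ``no nontrivial coincidence'', the transcripts have the same distribution in both models. Given a transcript, the probability of no coincidence depends on that transcript's words and on the group. So the conditioning reweights transcripts differently for $H_1$ and $H_2$.

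\textbf{Point (ii).} As written, ``at each new query the probability of a fresh coincidence is at most $2/\sqrt n$'' is too strong. The $k$-th new element must be compared with all $k-1$ earlier ones, so the bound is $2(k-1)/\sqrt n$. A per-query bound of $O(1/\sqrt n)$ would yield $\Omega(\sqrt n)$, contradicting the distinguisher above.

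\textbf{The fix.} Both points are fixed, and the adaptivity obstacle disappears, with an identical-until-bad coupling.
\begin{enumerate}[label=(\arabic*)]
\item Let an ideal oracle $R$ answer each formally new vector in $\ZZ_4^s$ with a fresh uniformly random unused label, and repeat labels only for formally equal vectors.
\item Against $R$, the deterministic algorithm's transcript, and hence its $O(q)$ formal vectors, is a function of $R$'s coins alone. Draw the seeds i.i.d.\ uniform in $H$, independently of those coins.
\item The relabeled oracle for $H$ can be coupled to agree with $R$ unless two distinct vectors collide on the seeds. This bad event also absorbs the without-replacement sampling of new labels, because a new seed equal to an old element is a coincidence with coefficient $1$ on that seed.
\item Since the pairs are now fixed, the plain union bound gives $\mathrm{TV}(P_H,R)\le\Pr[\mathrm{Bad}]\le\binom{O(q)}{2}\cdot 2/\sqrt n$ for each $H\in\{H_1,H_2\}$.
\item The triangle inequality then gives $\mathrm{TV}(P_{H_1},P_{H_2})=O(q^2/\sqrt n)$, hence $q=\Omega(n^{1/4})$.
\end{enumerate}

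Your step-by-step conditioning route can also be made to work. In the regime $q^2\ll\sqrt n$, conditioning on a collision-free history inflates each pair's probability by at most a factor $1/\Pr[\text{collision-free}]\le 2$. The coupling is cleaner, though, and avoids the ``nearly uniform seeds'' hand-waving.
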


\begin{lemma}\label{IsunKnown}
    Let ${\cal G}$ be the set of all groups that are isomorphic to either $H_1=\mathbb{Z}_{2}^{m-1}$ or $H_2=\mathbb{Z}_{2}^m$. Any algorithm that, for $G\in {\cal G}$ of unknown size, decides with probability at least $2/3$ whether $G$ is isomorphic to $H_1$ or $H_2$ must access an oracle that selects uniformly random elements of $G$ and access the Cayley table of $G$ at least $\Omega({|G|}^{1/2})$ times. 
\end{lemma}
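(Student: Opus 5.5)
The plan is an indistinguishability argument. I will show that an algorithm making $Q=o(\sqrt{|G|})$ oracle calls (random samples plus Cayley-table queries) has a view whose distribution is almost the same on $H_1=\ZZ_2^{m-1}$ and on $H_2=\ZZ_2^m$. Both groups are presented with elements relabeled by a uniformly random injection into a large label universe, so the label values themselves carry no information. Write $N=2^m$. Then $|H_1|=N/2$, and the bound will be $\Omega(\sqrt{N/2})=\Omega(\sqrt{|G|})$ in both cases.

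First I define an \emph{ideal process} that does not depend on the group. Suppose the algorithm has received samples $s_1,\ldots,s_k$. Every element it can reach through table queries on elements seen so far is a $\ZZ_2$-combination of the samples, i.e. a subset $S\subseteq\{1,\ldots,k\}$, since $*$ is addition in $\ZZ_2^d$. Each element it has seen therefore carries a formal name $S$, and a query on $S,S'$ returns $S\triangle S'$. In the ideal process the samples are treated as free generators. Every new formal subset gets a fresh random label. A formal subset seen before gets its old label, and this includes the empty set, the identity, which arises for example from $x*x$. This process is identical for both groups. I then couple the real process for $H_d$, $d\in\{m-1,m\}$, with the ideal one. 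The real view differs from the ideal view only on the event $B$ that two formal subsets $S\ne S'$ touched during the run evaluate to the same group element, i.e. $\sum_{i\in S\triangle S'} s_i=0$ in $\ZZ_2^d$ for a nonempty $S\triangle S'$. A repeated sample $s_i=s_j$ is the special case $S\triangle S'=\{i,j\}$, and a sample equal to the identity is also covered. Outside $B$, the labels in the real process are fresh uniform labels exactly as in the ideal one.

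The main obstacle is adaptivity: the formal subsets that get queried depend on earlier answers. I resolve this with the coupling. Run the algorithm on the ideal process. Its sequence of queries is then a random variable that is independent of the group elements $s_1,s_2,\ldots\sim U(\ZZ_2^d)$, because it depends only on the algorithm's coins and the fresh labels. Condition on that sequence. At most $Q$ formal subsets are touched, hence at most $Q^2$ pairs. For a fixed nonempty $T$, the sum $\sum_{i\in T}s_i$ is uniform on $\ZZ_2^d$, so it equals $0$ with probability $2^{-d}\le 2/N$. A union bound gives $\Pr[B]\le 2Q^2/N$. On the complement of $B$ the real run coincides with the ideal run, so the total variation distance between the real views on $H_1$ and on $H_2$ is at most $4Q^2/N$. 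This also shows that hiding $|G|$ is essential: it is why the oracle's samples reveal nothing beyond the span structure.

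Finally, a correct algorithm must output different answers on $H_1$ and $H_2$ with probability at least $2/3$ each, so the distance between the two view distributions must be at least $1/3$. This forces $4Q^2/N\ge 1/3$, i.e. $Q=\Omega(\sqrt N)=\Omega(\sqrt{|G|})$. The lower-bound theorem for the PS-model then follows: take ${\cal G}_2$ to be the groups isomorphic to $\ZZ_2^{m-1}$ for odd $m-1$, so that the two instances lie on opposite sides of the class. Then check that $\ZZ_2^m$ relabeled is $1/4$-far from every member of the class. I expect that to be a routine distance calculation, which I would take from~\cite{bshouty2025}.
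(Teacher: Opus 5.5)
Your argument is correct, and it takes a different route from the paper only in the sense that the paper gives no proof: it imports this lemma from \cite{bshouty2025}, alongside Lemma~\ref{IsoKnown}, so your proposal is a self-contained replacement for that citation.

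The decisive step holds. The ideal run (the algorithm interacting with the free process) is independent of the actual samples $s_1,s_2,\ldots$. Hence the at most $\binom{Q}{2}$ symmetric differences of touched subsets are fixed nonempty index sets. Each of them sums to $0$ in $\ZZ_2^d$ with probability exactly $2^{-d}$. Moreover, the first step at which the real run departs from the ideal one is necessarily such a coincidence between two subsets the ideal run touched.

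What your route buys:
\begin{itemize}
\item It handles samples and adaptive Cayley-table queries in a single union bound.
\item It gives an explicit constant, $Q\ge\sqrt{N/12}$.
\item It makes visible a point the citation hides: the hard instances must be randomly relabeled into a universe that does not depend on $|G|$. With names $1,\ldots,|G|$, a few samples would already reveal the size.
\end{itemize}
The paper's citation buys only brevity and parallelism with the FS-model bound taken from the same source.

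Two small remarks.
\begin{itemize}
\item A PS-model algorithm cannot know its size-dependent budget, and your coupling already handles this. Stopping decisions agree until the first coincidence, so even the budget used on the smaller instance $\ZZ_2^{m-1}$ must be $\Omega(\sqrt N)$, not just the maximum over the two instances.
\item The distance check in your last paragraph is vacuous, not a calculation to import. Farness is measured over the same underlying set, and no member of your class has order $2^m$ with $m$ even. By contrast, the paper's even-rank class in Theorem~\ref{lb04} genuinely needs Lemma~\ref{TGD}.
\end{itemize}
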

In \cite{GallY13}, Gall and Yoshida proved the following result.
\begin{lemma}\label{TGD}
   For any two non-isomorphic groups $G$ and $H$, $G$ is $1/23$-far from $H$.
\end{lemma}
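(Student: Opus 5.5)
The statement is Lemma~\ref{TGD}, a result of Gall and Yoshida. We would prove it by a stability (``99\% homomorphism'') argument. Identify both groups with the same underlying set $G$, with $|G|=|H|=n$, and let $\cdot$ and $*$ be the two group operations. Suppose the Cayley tables differ in fewer than $\delta n^2$ entries, where $\delta=1/23$. We will build an isomorphism, contradicting non-isomorphism. Let $D=\{(x,y): x\cdot y\neq x*y\}$, so $|D|<\delta n^2$.

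\textbf{Step 1: a popular translate.} For fixed $x$, the map $y\mapsto x\cdot y$ (resp.\ $x*y$) is a bijection; write $\lambda_x$ and $\mu_x$ for these left translations. By averaging, many $x$ have $\lambda_x$ and $\mu_x$ agreeing on all but about $\delta n$ points. Define $\phi(g)=$ the element $h$ maximizing agreement. Concretely, for $g\in G$ consider the triples $(x,y)$ with $y=x^{-1}\cdot g$ in $(G,\cdot)$. We use the map $f:(G,\cdot)\to(G,*)$ given by the identity on the set, and check it is close to a homomorphism. The triples $(x,y,z)$ satisfying $(x\cdot y)\cdot z$ and $x\cdot(y\cdot z)$, with every intermediate product agreeing with the corresponding $*$-product, have density at least $1-4\delta$ by the union bound over the four products. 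For such triples, associativity in both groups holds simultaneously along the same values.

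\textbf{Step 2: self-correction.} Define $\phi(g)=\mathrm{plurality}_{x}\,\big( x^{-1}_{*} * (x\cdot g)\big)$, where $x^{-1}_*$ is the inverse in $H$. For each fixed $g$, the pairs $(x,g)$ and $(x^{-1}_*,x\cdot g)$ are ``good'' for most $x$. However, uniformity over $g$ must be argued, as in the BLR analysis. Using the group structure of $(G,\cdot)$, the quantity $x\cdot g$ is uniform as $x$ varies. A two-sample collision argument in the style of BLR then shows that each $\phi(g)$ is taken by more than half of the $x$ once $\delta$ is small (BLR-type constants give a threshold like $1/23$ after careful bookkeeping). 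Consequently $\phi$ is a homomorphism $(G,\cdot)\to(G,*)$: for $g,h$ pick an $x$ good for all three of $g$, $h$, and $g\cdot h$ simultaneously, which is possible since the three failure probabilities sum to less than $1$. This uses associativity of both operations.

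\textbf{Step 3: bijectivity.} The homomorphism $\phi$ agrees with the identity on more than half of the $g$: for most $g$, a random $x$ gives $x\cdot g=x*g$ and hence the correction returns $g$. So the image of $\phi$ has size greater than $n/2$. Since the image is a subgroup of $H$, it must be all of $H$. Then $\phi$ is a surjection between sets of equal size, hence a bijection and thus an isomorphism, which is a contradiction.

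\textbf{Main obstacle.} The delicate point is the constant: showing that the plurality in Step 2 is a strict majority, uniformly in $g$, with $\delta=1/23$. Here the non-abelian BLR analysis must be done carefully, counting the bad events for $(x,g)$, $(x',g)$, and the cross terms. I would first try the standard argument of Ben-Or et al.\ for homomorphism testing over arbitrary groups, tracking the constants exactly.
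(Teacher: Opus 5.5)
Your plan is correct, and it differs from the paper because the paper gives no proof of Lemma~\ref{TGD}; it quotes the result from Le Gall and Yoshida~\cite{GallY13}. Your route is self-correction. View the identity as a map $(G,\cdot)\to(G,*)$ that is multiplicative outside the set $D$ of density $\delta$, correct it to a genuine homomorphism $\phi$, and note that $\phi$ fixes more than $n/2$ points. Its image is then a subgroup of order greater than $n/2$, hence all of $(G,*)$. This makes the lemma self-contained and, once the constants are tracked, gives a much better constant than $1/23$. The citation buys brevity; your argument buys a transparent proof and shows that the constant is not where any difficulty lies.

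However, the estimate that carries the proof is only asserted in your Step~2. It is routine, not delicate:
\begin{itemize}
\item Write $c_x(g)=x_*^{-1}*(x\cdot g)$, so $c_x(g)=g$ exactly when $(x,g)\notin D$.
\item For independent uniform $x,y$, set $w=y\cdot x^{-1}$. Then $(w,x)$ and $(w,x\cdot g)$ are each uniform on $G^2$. If neither lies in $D$, associativity of both operations gives $c_y(g)=(w*x)_*^{-1}*(w*(x\cdot g))=c_x(g)$.
\item Hence the collision probability of $c_x(g)$ is at least $1-2\delta$. Since $\max_v p_v\ge\sum_v p_v^2$, the plurality value $\phi(g)$ is taken by at least a $(1-2\delta)$-fraction of the $x$, for every $g$.
\item In the homomorphism step the three events must be $c_x(g)=\phi(g)$, $c_{x\cdot g}(h)=\phi(h)$ and $c_x(g\cdot h)=\phi(g\cdot h)$. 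The base point for $h$ has to be shifted to $x\cdot g$; otherwise $\phi(g)*\phi(h)$ does not telescope. Each event fails with probability at most $2\delta$, so $\delta<1/6$ suffices.
\item Fewer than $2\delta n$ columns $g$ have $\Pr_x[(x,g)\in D]\ge 1/2$, and every other $g$ has $\phi(g)=g$.
\end{itemize}
So the argument proves the lemma with $1/6$ in place of $1/23$. The sharper analysis of Ben-Or et al.\ reaches any $\delta<2/9$, which is optimal. To see this, identify $\mathbb{Z}_9$ with $\mathbb{Z}_3^2$ via $3h+d\mapsto(h,d)$, $d\in\{-1,0,1\}$. The two tables then differ exactly when $d_1=d_2\neq 0$, i.e., on $2/9$ of the entries.

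Two smaller points. Step~1 plays no role and can be deleted. The claim at the start of Step~2, that for each fixed $g$ the pair $(x,g)$ is good for most $x$, is false. If $*$ is $\cdot$ transported along a transposition $(a\ b)$, then $(x,a)\in D$ for all but at most four $x$, and the corrected map sends $a$ to $b$. This is precisely why the collision argument is needed, rather than a column-by-column count.
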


We now give the lower bounds for testers.
\begin{theorem}\label{lb03}
    Let ${\cal G}$ be the set of all abelian groups isomorphic to $H_1={\mathbb Z}_{2^2}^m$. Then,
    in the FS-model, any tester for ${\cal G}$ with $\epsilon<1/23$ must run in time $\Omega({|G|^{1/4}})$.
\end{theorem}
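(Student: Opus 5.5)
The plan is a reduction: any tester for ${\cal G}$ yields a distinguisher for the isomorphism-type problem of Lemma~\ref{IsoKnown}. Suppose ${\cal T}$ is a tester in the FS-model for ${\cal G}=\{\text{groups isomorphic to } H_1=\ZZ_{2^2}^m\}$ with some $\epsilon<1/23$, making $q$ queries. We will argue that the same procedure, run on an input group $G$ promised to be isomorphic to either $H_1$ or $H_2=\ZZ_{2^2}^{m-1}\times\ZZ_2^2$, decides which case holds with probability at least $2/3$. Since $|H_1|=|H_2|=4^m$, the input size is the same in both cases. This matters in the FS-model, where $|G|$ is given to the algorithm, because the input then carries no trivial distinguishing information.

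First, the case $G\cong H_1$. Here $(G,*)\in{\cal G}$, so ${\cal T}$ accepts with probability at least $2/3$.

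Second, the case $G\cong H_2$. We need that $(G,*)$ is $\epsilon$-far from every abelian group $(G,\cdot)$ isomorphic to $H_1$. Any such $(G,\cdot)$ is a group on the same underlying set that is non-isomorphic to $(G,*)$, since $H_1\not\cong H_2$: one has exponent $4$ with all cyclic factors of order $4$, the other has $\ZZ_2$ factors, so by the structure theorem they differ. Lemma~\ref{TGD} then says $(G,*)$ is $1/23$-far from $(G,\cdot)$, and $1/23>\epsilon$. So $(G,*)$ is $\epsilon$-far from ${\cal G}$, and ${\cal T}$ rejects with probability at least $2/3$.

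Together these show that accepting means "$H_1$" and rejecting means "$H_2$", so ${\cal T}$ is a distinguisher that succeeds with probability at least $2/3$. Lemma~\ref{IsoKnown} then forces ${\cal T}$ to access the elements and Cayley table $\Omega(|G|^{1/4})$ times. Each access costs at least unit time, so the running time is also $\Omega(|G|^{1/4})$.

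The main point to check is that the distance notion in Lemma~\ref{TGD} matches the Hamming distance used in our testing definition. Concretely, "far" must be measured between two operations on the same set, minimized over all identifications. Because our definition quantifies over all abelian operations $\cdot$ on $G$, and any group isomorphic to $H_1$ can be transported onto $G$ by a bijection (Lemma~\ref{TrivR}), the two notions coincide. A second check is that the access model of Lemma~\ref{IsoKnown} is the FS-model used here, with known size, access to all elements, and the Cayley table. By its statement it is.
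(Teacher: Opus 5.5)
Your proposal is correct and follows the paper's own proof: run the tester on an input promised to be isomorphic to $H_1$ or $H_2$, use Lemma~\ref{TGD} to show that an $H_2$-instance is $\epsilon$-far from ${\cal G}$ when $\epsilon<1/23$, and then invoke Lemma~\ref{IsoKnown}. Your extra checks make explicit details the paper leaves implicit: that $|H_1|=|H_2|=4^m$, that $H_1\not\cong H_2$, that far-ness must hold against every operation on $G$ isomorphic to $H_1$, and that each query costs at least unit time.
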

\begin{proof} Let ${\cal T}$ be a tester for ${\cal G}$ that runs in time $t$. Define the set of groups ${\cal G}'$ to be all groups that are isomorphic to either $H_1={\mathbb{Z}}_{2^2}^m$ or $H_2=\mathbb{Z}_{2^2}^{m-1}\times \mathbb{Z}_2^2$. 
Given $G\in {\cal G}'$, we run the tester ${\cal T}$ on $G$ with $\epsilon<1/23$. 
If $G$ is isomorphic to $H_1$, the tester accepts with probability at least 
$2/3$. If $G$ is isomorphic to $H_2$, then since, by Lemma~\ref{TGD}, $H_2$ is $1/23$-far from $G$, the tester rejects with probability at least $2/3$. Thus, the tester ${\cal T}$ can distinguish between $G\in {\cal G}'$ that are isomorphic to $H_1$ and those that are isomorphic to $H_2$. By Lemma~\ref{IsoKnown}, it must run in time $\Omega(|G|^{1/4})$. 
\end{proof}

We now prove the following lower bound.
\begin{theorem}\label{lb04}
    Let ${\cal G}$ be the set of all abelian groups of even rank. Then,
    in the PS-model, any tester for ${\cal G}$ with $\epsilon<1/23$ must run in time $\Omega(\sqrt{|G|})$.
\end{theorem}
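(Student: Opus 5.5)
We reduce from Lemma~\ref{IsunKnown}, exactly as in the proof of Theorem~\ref{lb03}. Suppose ${\cal T}$ is a tester for ${\cal G}$, the class of abelian groups of even rank, in the PS-model, running in time $t$ with some $\epsilon<1/23$. We use ${\cal T}$ to decide, for a group $G$ of unknown size isomorphic to either $H_1=\ZZ_2^{m-1}$ or $H_2=\ZZ_2^m$, which of the two it is. Such a $G$ is presented through the oracle for uniformly random elements and the Cayley table of the elements seen so far, which is precisely the PS-model access that ${\cal T}$ expects. So we simply run ${\cal T}$ on $G$, and we accept or reject according to its answer.

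We may assume without loss of generality that $m$ is even; the odd case is symmetric, with the roles of $H_1$ and $H_2$ swapped. Then $H_2=\ZZ_2^m$ has rank $m$, which is even, so $H_2\in{\cal G}$. Meanwhile $H_1=\ZZ_2^{m-1}$ is elementary abelian of rank $m-1$, which is odd. Its rank is intrinsic: the minimal number of generators, or equivalently $\dim_{\ZZ_2}$. Hence every abelian group isomorphic to $H_1$ has odd rank and lies outside ${\cal G}$.

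Case 1: $G\cong H_2$. Then $G\in{\cal G}$, so ${\cal T}$ accepts with probability at least $2/3$.

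Case 2: $G\cong H_1$. We must show that $(G,*)$ is $\epsilon$-far from every abelian group $(G,\cdot)\in{\cal G}$ on the same underlying set. Any such $(G,\cdot)$ has order $2^{m-1}$ and even rank, so it is not isomorphic to $\ZZ_2^{m-1}$. By Lemma~\ref{TGD}, $(G,*)$ is therefore $1/23$-far from it, and hence $\epsilon$-far since $\epsilon<1/23$. So ${\cal T}$ rejects with probability at least $2/3$.

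Thus ${\cal T}$ distinguishes $H_1$ from $H_2$ with probability at least $2/3$. Lemma~\ref{IsunKnown} then shows it must make $\Omega(\sqrt{|G|})$ accesses, so its running time is $\Omega(\sqrt{|G|})$. Since $|H_1|$ and $|H_2|$ differ only by a factor of $2$, the bound holds in terms of the size of the input group.

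The main point to check is the application of Lemma~\ref{TGD} in Case 2. Distance is measured against groups on the same set $G$, and we need non-isomorphism with \emph{every} member of ${\cal G}$ of that order, not just with $H_2$. Non-isomorphism holds here because rank parity is an isomorphism invariant. The only other subtlety is that $m$ is not fixed in parity; this is handled by choosing which of $H_1,H_2$ plays the accepted role according to the parity of $m$.
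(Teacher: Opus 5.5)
Your proposal is correct and takes essentially the same approach as the paper. Both use the tester as a distinguisher for Lemma~\ref{IsunKnown} and invoke Lemma~\ref{TGD} together with the fact that rank parity is an isomorphism invariant; the only cosmetic difference is that you handle both parities of $m$, whereas the paper fixes the pair $\ZZ_2^{2m}$ versus $\ZZ_2^{2m-1}$.
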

\begin{proof}
    Suppose there exists a tester ${\cal T}$ for ${\cal G}$ that runs in time $T(|G|)$. 
    Consider the set of groups ${\cal G}'$ that are isomorphic to either $H_1'=\ZZ_2^{2m}$ or $H_2'=\ZZ_2^{2m-1}$. 
    Given $G$ that is either isomorphic to $H_1'$ or $H_2'$, we run the tester ${\cal T}$. If the tester accepts, then $G$ is isomorphic to $H_1'$. 
    This is because $H_1'$ is of even rank, while $H_2'$ is $1/23$-far from any group of size $2^{2m-1}$ with even rank. For the same reason, if the tester rejects, $G$ is isomorphic to $H_2'$.

    By Lemma~\ref{IsunKnown}, for $n=2^{2m}$, we have $T(n) =\Omega(\sqrt{n})$. 
\end{proof}

It follows from~\cite{Fischer24} that any tester for abelian groups or their subclasses must make at least $\Omega(1/\epsilon)$ queries. 
This also follows from the fact that if we take any group and modify an $\epsilon<1/46$ fraction of the entries of its Cayley table, chosen uniformly at random, to different values, we obtain a magma $H$ that is $\epsilon$-far from every group. 
At least  $\Omega(1/\epsilon)$ queries are required to identify one of these modified entries.

\bibliographystyle{plainurl}
\bibliography{TestingRef}

\end{document}